\newtheorem{thm}{Theorem}[section]
\newtheorem{lemma}{Lemma}[section]
\newtheorem{prop}{Proposition}[section]
\newtheorem{defn}{Definition}[section]
\theoremstyle{definition}
\newcommand{\w}{\omega}        
\newcommand{\ed}{\mathbf{d}}   
\newcommand{\R}{\mathbb{R}}    
\newcommand{\g}{\mathfrak{g}}
\newcommand{\be}{\begin{equation}}
\newcommand{\ee}{\end{equation}}
\newcommand{\bea}{\begin{eqnarray}}
\newcommand{\eea}{\end{eqnarray}}
\newcommand{\restr}[1]{\vrule height3ex width.4pt depth1.4ex\lower1.4ex\hbox{\scriptsize $\,#1$}}
\newcommand{\rrestr}[1]{\vrule height2ex width.4pt depth0.9ex\lower0.9ex\hbox{\scriptsize $\,#1$}}
\newcommand{\LE}[1]{\mathcal{L}^{#1}E}
\begin{document}

\title{Gauge Equivalence and Conserved Quantities for Lagrangian Systems on Lie Algebroids}
\author{Jos\'e F. Cari\~nena$^{\dagger\,a}$ and Miguel
  Rodr\'{\i}guez-Olmos$^{\ddagger\,b}$\\ [2pt]
$^{\dagger)}$ Departamento de F\'{\i}sica Te\'orica, Facultad de Ciencias,\\
 Universidad de Zaragoza. 5009, Spain.\\
$^{\ddagger)}$ School of Mathematics, The University of Manchester,\\ Oxford Road, Manchester M13 9PL, UK.}

\maketitle

\begin{abstract}We develop a theory of gauge and dynamical equivalence for Lagrangian systems on Lie algebroids, also studying its relationship
with N\"other and non-N\"other conserved quantities.\end{abstract}

\begin{quote}
{\sl Keywords:}{\enskip} Lie algebroids, Lagrangian mechanics, N\"other
Theorem

{\sl MSC Classification:} {\enskip}  37J15, 70H33, 70S05


{\sl PACS:} {\enskip}02.40.Yy, 02.20.Sv, 45.20.Jj

\end{quote}
{\vfill}
\footnoterule
{\noindent\small
$^{a)}${\it E-mail address:} {jfc@unizar.es}  \\
$^{b)}${\it E-mail address:} {miguel.rodriguez.olmos@manchester.ac.uk}

\section{Introduction}

\qquad It has been proved during the last years that Lie algebroids \cite{HiggMac, Mac}
provide a very
general framework for dealing with different problems in Mechanics
 \cite{CM,Wein}
and control theory \cite{Cort04},
including reduction of mechanical systems with symmetries \cite{CarNunSan05,CarNunSan07a}. The concept of
Lie algebroid is a generalisation of both a Lie algebra and a
tangent bundle structure, these being the simplest examples of Lie
algebroids. Moreover, the Lie algebroid structure is well adapted to
variational calculus for constrained systems \cite{GG}
and the
geometric treatment of the concept of
quasi-coordinates finds its natural place in
Lie algebroid framework \cite{CarNunSan07b}. The usefulness of the Lie
algebroid approach for dealing with non-holonomic constrained system
is also beyond doubt \cite{Cort05}.

The geometric theory of Lagrangian formulation in Lie algebroids started in
\cite{Wein}
and developed in \cite{Mart,LMM} without using the Legendre transformation (see also
\cite{GGU})
gives rise to the problem of existence and uniqueness of a Lagrangian inducing a previously fixed dynamics and it is the uniqueness property which is
going to be analysed in this paper.

The organization of the paper is as follows: in Section 2 we give a concise
survey of the theory of Lagrangian mechanics on Lie algebroids, following
\cite{Wein, Mart, LMM, CarMart}. In Section 3 we introduce several notions of equivalence of Lagrangian systems on Lie algebroids, obtaining a   relationship among them in Theorem \ref{them gauge geo}. This generalises the results obtained in \cite{CI} for standard Lagrangian systems. Section 4 contains the main results of the paper: Theorem \ref{thm one param gauge} gives sufficient conditions under which we can obtain a one parameter family of Lagrangian functions gauge equivalent to a given one. This is related in Theorem \ref{noether} to the existence of N\"other conserved quantities for the Lagrangian dynamics on Lie algebroids. Finally, in Section 5 we study conserved quantities which are not of N\"other type in the case that the Lagrangian dynamics is a locally Hamiltonian symmetry of two different 2-forms on the Lie algebroid.

\section{Lie Algebroids and Lagrangian Mechanics}
\qquad This section collects some basic results about the geometry of Lie algebroids and the formulation
 of Lagrangian mechanics on them. All the results here are standard and easy to find in the literature, so this section remains purely expository.

\subsection{Lie algebroids} Recall (\cite{CM,Wein}) that a Lie algebroid is a
vector bundle $\tau^E:E\rightarrow M$, together with a Lie algebra
structure $[\cdot,\cdot]_E$ in the space of sections
$\Gamma(\tau^E)$ and a vector bundle morphism (anchor map)
$\rho^E:E\rightarrow TM$ satisfying the compatibility condition
$$[X,fY]_E=\rho^E(X)fY+f[X,Y]_E,\quad f\in C^\infty(M).$$
Extremal cases of Lie algebroids include the tangent bundle of $M$, $\tau_M:TM\rightarrow M$ with the usual Lie
algebra structure on $\Gamma(\tau_{TM})=\mathfrak{X}(M)$ given by the Lie bracket and
$\rho^{TM}=\text{id}$, or a Lie algebra $\mathfrak{g}$, thought as a
vector bundle over a point, for which the anchor is trivial and
$\Gamma({\tau^\g})=\g$ with its own Lie algebra data. Many other
examples can be obtained from classic geometric structures, such as
foliations and Poisson or Dirac geometry.

Morally, in the formulation of Lagrangian mechanics on Lie
algebroids the algebroid $E$ replaces the tangent bundle to a
manifold $M$ as the space of positions and velocities for the dynamics. In the same way,
sections of the exterior algebra $\Omega^\bullet(E)$ of the dual
bundle $\tau^{E^*}:E^*\rightarrow M$ play the role of
`generalised' differential forms. A differential calculus on a Lie
algebroid can be built on a graded derivation of degree 1,
$\ed^E:\Omega^k(E)\rightarrow\Omega^{k+1}(E)$, which takes the place
of the usual exterior derivative in this context and is defined by
\begin{eqnarray*} \ed^E\theta(X_1,\ldots,X_{k+1}) & = & \sum_i
(-1)^{i+1}\rho^E(X_i) \theta(X_1,
\ldots,\widehat X_i,\ldots,X_{k+1})\\
& + & \sum_{i<j}(-1)^{j+i}\theta([X_i,X_j]_E,X_1,\ldots,\widehat
X_i,\ldots, \widehat X_j,\ldots,X_{k+1})\ , \end{eqnarray*} where
$X_1,\ldots,X_{k+1}\in\Gamma(\tau^E)$ and $\theta\in \Omega^k(E)$.

This exterior derivative reduces to the usual exterior derivative when
$E=TM$. Its associated Lie
derivative $\ed^E_X$ along $X\in \Gamma(\tau^E)$ is defined by
\begin{eqnarray}\nonumber
\ed^E_Xf & = & \rho^E(X)f,\\
\nonumber \ed^E_X Y & = & [X,Y]_E,\\
\label{magic formula} \ed^E_X \alpha & = & (\iota_X\circ\ed^E+\ed^E\circ\iota_X)\alpha,
\end{eqnarray}
for $X,Y\in\Gamma (\tau^E),\,f\in
C^\infty(M),\,\alpha\in\Omega^\bullet(E)$. The following properties
generalise some well-known properties of the usual Lie derivative
(see e.g. \cite{Mart,CarMart} for a proof).
\begin{eqnarray*}
\ed^E\circ\ed^E & = & 0,\\
\left[\ed^E,\ed^E_X\right] & = & 0,\\
\ed^E_{[X,Y]_E} & = & [\ed^E_X,\ed^E_Y],\\
\iota_{[X,Y]_E} & = & [\ed^E_X,\iota_Y],\\
\ed^E_X(\alpha\wedge\beta) & = & \ed^E_X\alpha\wedge\beta
+\alpha\wedge\ed^E_X\beta.
\end{eqnarray*}

\subsection{The prolongation of a Lie algebroid} Let $M'$ be a smooth
manifold and let $f:M'\rightarrow M$ a fibration. For each point $x'\in M'$ let
$\mathcal{T}_{x'}^EM'$ be the linear space
$$
\mathcal{T}_{x'}^EM'=\{(a,v)\in E_{x}\times T_{x'}M'\mid \rho^E(a)=T_{x'}f(v)\},
$$
where $f(x')=x$ and $Tf$ denotes the tangent map to $f$, $Tf:TM'\to TM$. The
set
$\mathcal{T}^EM'=\bigcup_{x'\in M'}\mathcal{T}_{x'}^EM'$ is endowed with a natural vector
bundle structure over $M'$; the vector bundle projection  $\tau_{M'}^E:\mathcal{T}^EM'\to M'$
is $\tau_{M'}^E(a,v)=\tau_{M'}(v)$. Note that in particular we have $\tau^E(a)=(f\circ \tau_{M'})(v)$, with
$\tau_{M'}:TM'\rightarrow M'$
being  the tangent bundle projection.

 Moreover, such a vector bundle can be
endowed with a Lie algebroid structure, the anchor map being the projection onto
the second factor and the bracket on the linear space of sections is the only
one that for two sections of the form $Y_i(x')=(x',\sigma_i(f(x')),U_i(x'))$,
$i=1,2$, with $\sigma_i\in \Gamma(\tau^E)
,U_i\in\mathfrak{X}(M)$, satisfies
$$\lbrack\!\lbrack Y_1,Y_2\rbrack\!\rbrack(x')=(x',[\sigma_1,\sigma_2]_E(f(x')), [U_1,U_2](x'))\,.
$$

This Lie algebroid is called the \emph{prolongation of $E$ along $f$} and more
details about it can be found in \cite{LMM,{MMS}}.

 The interesting case for the Lagrangian
formulation of Mechanics on Lie algebroids is when $M'=E$ and $f=\tau^E$.
In this case we will denote simply by $\LE{}$ the corresponding prolongation of
$E$ along $\tau^E$
 and it is possible to show that it can also be realised as the total space of the pullback bundle $({\rho^E})^*T\tau^E$.
This is the particular case studied in
 \cite{Mart} (see also \cite{LMM}). Then $\LE{}$ has a Lie algebroid structure
 that we next describe:

\begin{enumerate}

\item $\LE{}$ is a vector bundle over $E$ with projection $\tau^{\LE{}}:\LE{}\rightarrow E$ given by $\tau^{\LE{}}(a,v)=\tau_E (v)$.

\item It can be proved that every section $X\in\Gamma (\tau^{\LE{}})$ can be written as $$X(a)=(f_X(a)X_1(\tau^E(a)),X_2(a)),$$
where $X_1\in \Gamma (\tau^E),\,X_2\in\mathfrak{X}(E)$ and $f_X\in C^\infty(E)$ satisfying the condition
$$T_a\tau^E(X_2(a))=f_X(a)\rho^E(X_1(\tau^E(a))),\quad\forall a\in E.$$

\item The Lie bracket in $\Gamma(\tau^{\LE{}})$, denoted $[\cdot,\cdot]_{\LE{}}$, is given by
$$[X,Y]_{\LE{}}=\left(f_Xf_Y([X_1,Y_1]_E\circ\tau^E)+(X_2f_Y)(Y_1\circ\tau^E)-(Y_2f_X)(X_1\circ\tau^E),[X_2,Y_2]\right).$$

\item The anchor $\rho^{\LE{}}:\LE{}\rightarrow TE$ is given by
$\rho^{\LE{}}(X)=X_2$.
\end{enumerate}

\subsection{Complete and vertical lifts}
We start by defining lifts to $E$ of functions in $M$. If $f\in
C^\infty(M)$, its \emph{complete and vertical lifts} are the
functions $f^c,f^v\in C^\infty(E)$ defined as
$$f^c(a)=\rho^E(a)f,\quad f^v=f\circ\tau^E,\quad \forall\, a\in E$$

Let $X$ be a section of $\tau^E$. We can also define its complete
and vertical lifts $X^\text{vert},X^\text{comp}\in\mathfrak{X}(E)$
as follows: First, let $\alpha\in\Omega^1(E)$, then
$\widehat{\alpha}\in C^\infty (E)$ is defined as
$$\widehat{\alpha}(a)=\langle\alpha(\tau^E(a)),a\rangle.$$

With this notation we have:

\begin{enumerate}
\item The vertical lift of $X$ is given by
$$X^\text{vert}(a)=\frac{d}{dt}(a+tX(\tau^E(a)))\rrestr{t=0}\,.$$

\item The complete lift of $X$ is the unique $\tau^E$-projectable vector field on $E$ satisfying
$${\tau^E}_*(X^\text{comp})=\rho^E(X)\quad \text{and} \quad X^\text{comp}(\widehat{\alpha})=\widehat{\ed^E_X\alpha}.$$
\end{enumerate}

Finally, we can also lift sections of $\tau^E:E\to M$ to sections of the
prolongation $\tau^{\LE{}}:\LE{}\to E$. For any $X\in\Gamma (\tau^E)$ the complete and
vertical lifts, $X^c,X^v\in\Gamma(\tau^{\LE{}})$ are:
$$X^v(a)=(0,X^\text{vert}(a)),\quad
X^c(a)=(X(\tau^E(a)),X^\text{comp}(a)).
$$
The set of complete and vertical lifts generates
$\Gamma(\tau^{\LE{}})$ as a $C^\infty(E)$-module. Therefore,  the Lie algebroid structure of
$\LE{}$ is characterised by
\begin{equation}\label{props prolongation}\begin{array}{lll}
[X^c,Y^c]_{\LE{}}=([X,Y]_E)^c, &  [X^c,Y^v]_{\LE{}}=([X,Y]_E)^v, & [X^v,Y^v]_{\LE{}}=0\vspace{2mm}\\
\rho^{\LE{}}(X^c)(f^c)=(\rho^E(X)(f))^c, & \rho^{\LE{}}(X^c)(f^v)=(\rho^E(X)(f))^v, & \vspace{2mm}\\
\rho^{\LE{}}(X^v)(f^c)=(\rho^E(X)(f))^v, &
\rho^{\LE{}}(X^v)(f^v)=0,\end{array}
\end{equation}
for $X,Y\in\Gamma(\tau^E)$ and $f\in C^\infty(M)$.

\subsection{The Euler section and the vertical endomorphism} The \emph{Euler section} is the section of $\tau^{\LE{}}$
defined by $$\Delta(a)=(0,a^v_a),\quad \forall a\in E,$$ where $0$ is the zero
element of ${\tau^E}^{-1}(\tau^E(a))$ and
$$a_a^v=\frac{d}{dt}(a+ta)\rrestr{t=0}\in T_aE\,.$$

The following property of such a section be useful later on (see
\cite{Mart} for a proof).
\begin{lemma}\label{liouvillecomplete}
Let be $X\in\Gamma (\tau^E)$. Then
$$[\Delta,X^v]_{\LE{}}=-X^v\quad\text{and}\quad [\Delta,X^c]_{\LE{}}=0.$$
\end{lemma}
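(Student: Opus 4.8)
The plan is to write all three sections in the standard prolongation form $(f_X(a)X_1(\tau^E(a)),X_2(a))$ and feed them into the bracket formula for $\LE{}$. Denote by $\Delta_E\in\mathfrak{X}(E)$ the fibrewise dilation (Liouville) vector field $a\mapsto a^v_a$, so that the Euler section is $\Delta=(0,\Delta_E)$ with $f_\Delta\equiv 0$ and second component $\Delta_E$. From their definitions the lifts read $X^v=(0,X^{\text{vert}})$, with $f_{X^v}\equiv 0$, and $X^c=(X,X^{\text{comp}})$, with $f_{X^c}\equiv 1$ and first-component section $X$. With these identifications the computation becomes almost mechanical.

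First I would read off the algebroid (first) component of each bracket. In $[\Delta,X^v]_{\LE{}}$ every term of the first slot carries either a factor $f_\Delta=0$ and $f_{X^v}=0$, or a derivative of one of these identically vanishing functions ($\Delta_E f_{X^v}$ or $X^{\text{vert}}f_\Delta$); hence the first component is $0$. In $[\Delta,X^c]_{\LE{}}$ the only potentially surviving term is $(\Delta_E f_{X^c})(X\circ\tau^E)$, but $f_{X^c}\equiv 1$ is constant, so $\Delta_E f_{X^c}=0$ and the first component again vanishes. Both brackets therefore collapse to their second (vector-field) components, and the lemma will follow once I establish
$$[\Delta_E,X^{\text{vert}}]=-X^{\text{vert}}\qquad\text{and}\qquad[\Delta_E,X^{\text{comp}}]=0$$
as vector fields on $E$, since then $[\Delta,X^v]_{\LE{}}=(0,-X^{\text{vert}})=-X^v$ and $[\Delta,X^c]_{\LE{}}=(0,0)=0$.

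The vertical identity is immediate. In bundle coordinates $\Delta_E=y^i\partial_{y^i}$ and $X^{\text{vert}}=X^i(x)\partial_{y^i}$, and a one-line computation (the second-order terms cancel, while $\partial_{y^i}y^j=\delta^j_i$ produces the sign) yields $[\Delta_E,X^{\text{vert}}]=-X^{\text{vert}}$; equivalently $\Delta_E$ generates the flow $\phi_t(a)=e^t a$ under which $X^{\text{vert}}$ is homogeneous, so $\mathcal{L}_{\Delta_E}X^{\text{vert}}=-X^{\text{vert}}$. This already closes the first half of the statement.

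The main obstacle is the second identity $[\Delta_E,X^{\text{comp}}]=0$, which I would prove intrinsically using the defining properties of the complete lift rather than in coordinates. The idea is to test the commutator against the two families of functions whose differentials span $T^*E$ at every point: the basic functions $f^v=f\circ\tau^E$ and the fibrewise-linear functions $\widehat\alpha$. On $f^v$ one has $\Delta_E f^v=0$ and, by projectability, $X^{\text{comp}}f^v=(\rho^E(X)f)^v$, which is again basic, so $[\Delta_E,X^{\text{comp}}]f^v=0$. On $\widehat\alpha$ the defining relation $X^{\text{comp}}\widehat\alpha=\widehat{\ed^E_X\alpha}$ combined with the weight-one homogeneity $\Delta_E\widehat\beta=\widehat\beta$ of every linear function (applied both to $\widehat\alpha$ and to $\widehat{\ed^E_X\alpha}$) gives $\Delta_E(X^{\text{comp}}\widehat\alpha)=\widehat{\ed^E_X\alpha}=X^{\text{comp}}(\Delta_E\widehat\alpha)$, so $[\Delta_E,X^{\text{comp}}]\widehat\alpha=0$ as well. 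Since these functions separate directions on $E$, the commutator vanishes. A coordinate alternative is equally short: $X^{\text{comp}}$ has a base part independent of the fibre coordinates and a fibre part linear in them, and Euler's relation for degree-one functions makes the bracket with $\Delta_E=y^i\partial_{y^i}$ vanish term by term.
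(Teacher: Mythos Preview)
Your argument is correct. Note, however, that the paper does not actually prove this lemma: it simply states the identities and refers the reader to \cite{Mart} for a proof. There is therefore no ``paper's own proof'' to compare against. Your approach---reducing the $\LE{}$-bracket to its first and second components via the explicit formula, showing the first components vanish by inspection of the $f$-coefficients, and then establishing $[\Delta_E,X^{\text{vert}}]=-X^{\text{vert}}$ and $[\Delta_E,X^{\text{comp}}]=0$ as ordinary vector-field identities on $E$---is a clean and self-contained proof. The intrinsic argument for the second identity (testing against $f^v$ and $\widehat\alpha$, whose differentials span $T^*E$) is particularly nice and avoids any coordinate bookkeeping for $X^{\text{comp}}$.
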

The \emph{vertical endomorphism} is the vector bundle automorphism
$S:\LE{}\rightarrow\LE{}$ such that for each section $X\in\Gamma(\tau^E)$,
\begin{equation}S(X^v)=0\quad\text{and}\quad S(X^c)=X^v.
\end{equation}
Note that all the concepts so far introduced restrict in the case
$E=TM$ to the usual constructions for the double tangent bundle
$TTM$. In the same vein one can define a second order differential
equation (SODE) on $E$ as a section $X\in\Gamma(\tau^{\LE{}})$ satisfying
\begin{equation}\label{SODE}S(X)=\Delta.\end{equation}

\subsection{Lagrangian mechanics} With the ingredients previously introduced, we can formulate the problem of Lagrangian
mechanics on Lie algebroids. For that, we will follow  closely \cite{Mart}
and \cite{LMM}.  Let $L\in C^\infty(E)$ be a Lagrangian. The
\emph{Poincar\'e 1- and 2-forms} relative to $L$ are defined by
\begin{eqnarray}
\label{theta} \Theta_L & = & \ed^{\LE{}}L\circ S,\\
\label{omega} \omega_L & = & -\ed^{\LE{}}\Theta_L.
\end{eqnarray}
The \emph{energy function} $E_L\in C^\infty(E)$ is
$$E_L=\rho^{\LE{}}(\Delta)(L)-L.$$

Then, the dynamics associated to
$(E,M,\tau^E,[\cdot,\cdot]_E,\rho^E,L)$, or to $L$ in short, are the projection to $M$ by
$\tau^{E}$ of the integral curves of the vector field
$\rho^{\LE{}}(Z_L)\in\mathfrak{X}(E)$, where
$Z_L\in\Gamma(\tau^{\LE{}})$ is a solution of the equation
\begin{equation}\label{EL}
\iota_{Z_L}\omega_L=\ed^{\LE{}}E_L.\end{equation}
We will refer to $Z_L$ as the dynamics associated to $L$ as well.
The form $\w_L$ is
called \emph{non-degenerate} if $\w_L(X,Y)=0$ for every $Y\in\Gamma
(\tau^{\LE{}})$ implies $X=0$. In case $\omega_L$ is non-degenerate
$L$ is called \emph{regular}, and there is a unique solution $Z_L$
to \eqref{EL} which is also a SODE in the sense of \eqref{SODE}. In
this paper we will be concerned only with regular Lagrangians.

\subsection{Local expressions} We provide now concrete
local expressions for the objects defined so far. Suppose that
$\text{dim}\,M=m,\,\text{rank}\, E=p$ and let $U\subset E$ be a
trivializing open neighborhood coordinatised by $\{x^i,y^\alpha\}$
where $\{x^i\},i=1,\ldots,m$, are local coordinates on $M$ and
$\{y^\alpha\},\alpha=1,\ldots,p$, are linear coordinates on the
typical fiber of $E$ relative to a local basis of
$\Gamma(\tau^E)\rrestr{\tau^E(U)}$ given by
$\{e_\alpha\},\,\alpha=1,\ldots,p$. The local structure of $E$ is
encoded in the structure functions $C_{\alpha\beta}\,^\gamma,\, \rho^i_\alpha\in
C^\infty(\tau^E(U))$,  defined by
\begin{equation}\label{localstructure}
[e_\alpha,e_\beta]_E=C_{\alpha\beta}^\gamma\,e_\gamma,\quad
\rho^E(e_\alpha)=\rho^i_\alpha\frac{\partial}{\partial
x^i},\end{equation} where $i=1,\ldots,m$ and
$\alpha,\beta,\gamma=1,\ldots,p$.

It is shown in \cite{LMM} that a local basis of
$\Gamma(\tau^{\LE{}})\rrestr{U}$ is given by
$\{\widetilde{T}_\alpha,\widetilde{V}_\beta\},\,\alpha,\beta=1,\ldots,p$,
where \begin{equation}\label{localbasis} \widetilde{T}_\alpha (a) =
\left(e_\alpha (\tau^E(a)),\rho^i_\alpha\frac{\partial}{\partial
x^i}\restr{a}\right),\quad \widetilde{V}_\alpha (a)  =
\left(0,\frac{\partial}{\partial y^\alpha}\restr{a}\right),\quad
a\in U.
\end{equation}
We provide for later use the local expression of the two-form $\w_L$ in this coordinates: With respect to the dual basis $\{\widetilde{T}^\alpha,\widetilde{V}^\beta\}$ we have that, on $U$,
\begin{equation}\label{localsymplecticform}
\w_L=\frac{\partial^2L}{\partial y^\alpha\partial y^\beta}\widetilde{T}^\alpha\wedge\widetilde{V}^\beta+\left(\frac 12\frac{\partial L}{\partial y^\gamma}C^\gamma_{\alpha\beta}-\rho^i_\alpha\frac{\partial^2 L}{\partial x^i\partial y^\beta}\right)\widetilde{T}^\alpha\wedge\widetilde{T}^\beta.\end{equation}
Since we are assuming that $L$ is non-degenerate, the matrix
with elements $\left[\frac{\partial^2L}{\partial y^\alpha\partial
    y^\beta}\right]$, where ${\alpha,\beta\in(1,\ldots,p)}$, is invertible, and there is a unique solution $Z_L$ for \eqref{EL} with associated dynamics characterised by
\begin{equation}\label{localEL}
\frac{dx^i}{dt}=\rho^i_\alpha y^\alpha,\quad \frac {d}{dt}\left(\frac{\partial
    L}{\partial y^\alpha}  \right)=\rho^i_\alpha\frac{\partial L}{\partial
  x^i}-C^\gamma_{\alpha\beta}y^\beta\frac{\partial L}{\partial y^\gamma},\qquad
i=1,\ldots,m\,.
\end{equation}

\section{Gauge Equivalent Lagrangians}
In this section we introduce different notions of equivalence among
Lagrangians defined on $E$.

\subsection{Basic forms} In order to build different notions
of equivalence of Lagrangians, we will need the concept of basic and
semi-basic 1-forms, as well as several of their properties.

\begin{defn}
A 1-form $\theta\in \Omega^1(\LE{})$ is called semi-basic  if
$\theta(X^v)=0$ for every $X\in\Gamma (\tau^E).$ A 1-form $\theta\in
\Omega^1 (\LE{})$ is called basic if it is semi-basic and in
addition there is a unique form $\theta'\in \Omega^1 (E)$ such that
$\theta (X^c)=\theta' (X)\circ\tau^E$ for every $X\in \Gamma
(\tau^E)$. This defines a  linear bijection $\gamma$ between $\Omega^1(E)$
and $\Omega^1_\text{{\rm bas}}(\LE{})$, the space of basic 1-forms on
$\LE{}$, by $\gamma^{-1}(\theta)=\theta'$.
\end{defn}

The following two propositions collect some important properties of basic and semi-basic forms.
\begin{prop} \label{propbasic1} A closed semi-basic 1-form $\beta\in\Omega^1 (\LE{})$ is basic. In that case $\beta=\gamma (\beta')$ with $\ed^E\beta'=0$.
\end{prop}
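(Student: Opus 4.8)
The plan is to test the closedness hypothesis $\ed^{\LE{}}\beta=0$ against the generators of the $C^\infty(E)$-module $\Gamma(\tau^{\LE{}})$, namely the complete and vertical lifts, and to read off from the resulting identities first that $\beta$ descends to a $1$-form $\beta'$ on $E$ and then that $\beta'$ is $\ed^E$-closed. Throughout I will use the degree-one instance of the $\ed^{\LE{}}$ formula,
\[
\ed^{\LE{}}\beta(\mathcal{X},\mathcal{Y})=\rho^{\LE{}}(\mathcal{X})(\beta(\mathcal{Y}))-\rho^{\LE{}}(\mathcal{Y})(\beta(\mathcal{X}))-\beta([\mathcal{X},\mathcal{Y}]_{\LE{}}),
\]
together with the bracket and anchor relations \eqref{props prolongation}. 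Since a $1$-form is determined by its values on a generating set, it suffices to evaluate everything on pairs of lifts $X^c,Y^c,X^v,Y^v$, and the pair $(X^v,Y^v)$ gives no information because $[X^v,Y^v]_{\LE{}}=0$ and $\beta$ is semi-basic.

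First I would extract the descent from the pair $(X^c,Y^v)$. Here $\beta(Y^v)=0$ by semi-basicness, and $\beta([X^c,Y^v]_{\LE{}})=\beta(([X,Y]_E)^v)=0$ by \eqref{props prolongation} and semi-basicness again, so closedness forces $\rho^{\LE{}}(Y^v)(\beta(X^c))=0$ for all $X,Y\in\Gamma(\tau^E)$. As $\rho^{\LE{}}(Y^v)=Y^{\text{vert}}$ and the vector fields $Y^{\text{vert}}$ span the vertical distribution $\ker T\tau^E$ as $Y$ ranges over sections, this says that $\beta(X^c)\in C^\infty(E)$ is constant along the (connected) fibres of $\tau^E$. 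Hence there is a unique $\beta'(X)\in C^\infty(M)$ with $\beta(X^c)=\beta'(X)\circ\tau^E$.

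Next I would verify that $X\mapsto\beta'(X)$ is tensorial, so that $\beta'\in\Omega^1(E)$ and $\beta=\gamma(\beta')$, which is exactly the assertion that $\beta$ is basic. Additivity and $\R$-linearity are immediate from $(X+Y)^c=X^c+Y^c$; for $C^\infty(M)$-linearity I would invoke the lift identity $(fX)^c=f^v X^c+f^c X^v$ for $f\in C^\infty(M)$ (see \cite{Mart}), which combined with semi-basicness yields $\beta((fX)^c)=f^v\beta(X^c)=(f\beta'(X))\circ\tau^E$, whence $\beta'(fX)=f\beta'(X)$. Uniqueness of $\beta'$ follows from surjectivity of $\tau^E$. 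This tensoriality check is the step I expect to be the most delicate, since it is the only place where the precise behaviour of the complete lift under multiplication by basic functions enters; every other step is formal manipulation of the structure relations.

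Finally, to obtain $\ed^E\beta'=0$ I would evaluate closedness on the pair $(X^c,Y^c)$. Using $[X^c,Y^c]_{\LE{}}=([X,Y]_E)^c$ together with $\rho^{\LE{}}(X^c)(g^v)=(\rho^E(X)g)^v$ from \eqref{props prolongation}, applied to $g=\beta'(Y)$ and $g=\beta'(X)$ in $C^\infty(M)$, the identity $\ed^{\LE{}}\beta(X^c,Y^c)=0$ becomes
\[
\bigl(\rho^E(X)\beta'(Y)-\rho^E(Y)\beta'(X)-\beta'([X,Y]_E)\bigr)\circ\tau^E=0,
\]
and the parenthesised expression is precisely $\ed^E\beta'(X,Y)$. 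Since $\tau^E$ is surjective this vanishes identically, giving $\ed^E\beta'=0$ and completing the argument.
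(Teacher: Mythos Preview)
Your argument is correct and follows the same two-step strategy as the paper: evaluate $\ed^{\LE{}}\beta$ on $(X^c,Y^v)$ to obtain the descent $\beta(X^c)=\beta'(X)\circ\tau^E$, then on $(X^c,Y^c)$ to deduce $\ed^E\beta'=0$. The only difference is that you insert an explicit check of $C^\infty(M)$-linearity of $X\mapsto\beta'(X)$ via the identity $(fX)^c=f^vX^c+f^cX^v$, whereas the paper passes directly from ``$\beta(X^c)$ is a pullback'' to ``there exists $\beta'\in\Omega^1(E)$'' without spelling this out; your extra step is a reasonable point of rigor, not a different method.
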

\begin{proof}
Let be $X,Y\in\Gamma(E)$. Since $\beta$ is closed,
\begin{eqnarray*}
0 & = &
\ed^{\LE{}}\beta(X^c,Y^v)=\rho^{\LE{}}(X^c)(\beta(Y^v))-\rho^{\LE{}}(Y^v)(\beta(X^c))-
\beta\left([X^c,Y^v]_{\LE{}}\right)\\ & = &
-Y^\text{vert}(\beta(X^c)),
\end{eqnarray*}
since $\beta$ annihilates vertical sections and $[X^c,Y^v]_{\LE{}}=\left([X,Y]_E\right)^v$.
The above result implies that $\beta(X^c)={\tau^E}^*f$ for some $f\in C^\infty(M)$, and therefore
there exists $\beta'\in\Omega^1(E)$ such that $\gamma (\beta')=\beta$. That
$\beta'$ is closed follows again from the closeness of $\beta$, since
\begin{eqnarray*}
0 & = & \ed^{\LE{}}\beta(X^c,Y^c)=\rho^{\LE{}}(X^c)(\beta(Y^c))-\rho^{\LE{}}(Y^c)(\beta(X^c))-\beta\left([X^c,Y^c]_{\LE{}}\right)\\
 & = & \rho^{\LE{}}(X^c)(\beta'(Y)\circ\tau^E)-\rho^{\LE{}}(Y)(\beta'(X)\circ\tau^E)-\beta\left(\left([X,Y]_E\right)^c \right)\\
  & = & \rho^{\LE{}}(X^c)(\beta'(Y)\circ\tau^E)-\rho^{\LE{}}(Y)(\beta'(X)\circ\tau^E)-\beta'\left([X,Y]_E\right)\circ\tau^E \\
& = & \rho^{\LE{}}(X^c)(\beta'(Y))^v-\rho^{\LE{}}(Y)(\beta'(X))^v-(\beta'\left([X,Y]_E\right))^v \\
& = & (\rho^{E}(X)(\beta'(Y)))^v-(\rho^{E}(Y)(\beta'(X)))^v-(\beta'\left([X,Y]_E\right))^v \\
& = & (\rho^{E}(X)(\beta'(Y))-\rho^{E}(Y)(\beta'(X))-\beta'\left([X,Y]_E\right))^v \\
& = & (\rho^{E}(X)(\beta'(Y))-\rho^{E}(Y)(\beta'(X))-\beta'\left([X,Y]_E\right))\circ\tau^E \\
& = & \left(\ed^E\beta'(X,Y)\right)\circ\tau^E,
\end{eqnarray*}
and hence from the surjectivity of $\tau^E$, this is equivalent to
$\ed^E\beta'=0$.
\end{proof}
\begin{prop}
A 1-form $\beta\in\Omega^1(\LE{})$ is basic if and only if is semi-basic and $\ed^{\LE{}}_{X^v}\beta=0$ for every $X\in\Gamma(\tau^E)$.
\end{prop}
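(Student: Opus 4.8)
The plan is to compute the Lie derivative $\ed^{\LE{}}_{X^v}\beta$ by evaluating it on the generators $Z^c$ and $Z^v$ of $\Gamma(\tau^{\LE{}})$ and to compare the outcome with the defining condition for a basic form. Since $\beta$ is a $1$-form, the Cartan (magic) formula \eqref{magic formula} gives, for any $X,Z\in\Gamma(\tau^E)$ and any section $Y$ of $\tau^{\LE{}}$,
$$(\ed^{\LE{}}_{X^v}\beta)(Y)=\rho^{\LE{}}(X^v)(\beta(Y))-\beta\left([X^v,Y]_{\LE{}}\right).$$
Because a $1$-form is $C^\infty(E)$-linear and the complete and vertical lifts generate $\Gamma(\tau^{\LE{}})$ as a $C^\infty(E)$-module, it suffices to test this on $Y=Z^v$ and $Y=Z^c$. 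Using the bracket relations $[X^v,Z^v]_{\LE{}}=0$ and $[X^v,Z^c]_{\LE{}}=([X,Z]_E)^v$ from \eqref{props prolongation}, and the fact that a semi-basic $\beta$ annihilates every vertical lift, the bracket terms drop out and, recalling $\rho^{\LE{}}(X^v)=X^{\text{vert}}$, one is left with
$$(\ed^{\LE{}}_{X^v}\beta)(Z^v)=X^{\text{vert}}(\beta(Z^v))=0,\qquad (\ed^{\LE{}}_{X^v}\beta)(Z^c)=X^{\text{vert}}(\beta(Z^c)).$$
Thus, for a semi-basic $\beta$, the condition $\ed^{\LE{}}_{X^v}\beta=0$ for all $X$ is equivalent to $X^{\text{vert}}(\beta(Z^c))=0$ for all $X,Z\in\Gamma(\tau^E)$.

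For the forward implication I would assume $\beta$ basic, so that $\beta$ is semi-basic and $\beta(Z^c)=\beta'(Z)\circ\tau^E=(\beta'(Z))^v$ for some $\beta'\in\Omega^1(E)$. Since $\rho^{\LE{}}(X^v)(f^v)=0$ for every $f\in C^\infty(M)$ by \eqref{props prolongation}, the quantity $X^{\text{vert}}(\beta(Z^c))$ vanishes, and hence $\ed^{\LE{}}_{X^v}\beta=0$.

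For the converse I would assume $\beta$ semi-basic with $X^{\text{vert}}(\beta(Z^c))=0$ for all $X,Z$. As $X$ ranges over $\Gamma(\tau^E)$ the vectors $X^{\text{vert}}(a)$ sweep out the whole vertical tangent space $\ker T_a\tau^E$ at each $a\in E$; hence a function on $E$ annihilated by every $X^{\text{vert}}$ is constant along the (connected) fibres of $\tau^E$, i.e. of the form $h\circ\tau^E$ with $h\in C^\infty(M)$. Applying this to $\beta(Z^c)$ produces, for each $Z$, a function $h_Z\in C^\infty(M)$ with $\beta(Z^c)=h_Z\circ\tau^E$, and I set $\theta'(Z):=h_Z$.

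The main point, and the step I expect to require the most care, is to verify that $\theta'$ is genuinely a $1$-form on $E$, that is, $C^\infty(M)$-linear in $Z$ rather than merely a pointwise assignment. Additivity is immediate; for the function-linearity I would use the standard behaviour of the complete lift under multiplication by functions on $M$, namely $(fZ)^c=f^v\,Z^c+f^c\,Z^v$ (see \cite{Mart,LMM}). Evaluating $\beta$ and using semi-basicness to kill the $Z^v$ term yields $\beta((fZ)^c)=f^v\,\beta(Z^c)=(f\,h_Z)\circ\tau^E$, so that $\theta'(fZ)=f\,\theta'(Z)$. Hence $\theta'\in\Omega^1(E)$ is well defined, it is clearly unique, and $\beta(Z^c)=\theta'(Z)\circ\tau^E$ together with semi-basicness says exactly that $\beta=\gamma(\theta')$ is basic, completing the equivalence.
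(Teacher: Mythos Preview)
Your proof is correct and follows essentially the same approach as the paper: reduce $\ed^{\LE{}}_{X^v}\beta$ to its values on $Z^v$ and $Z^c$, use the bracket relations and semi-basicness to obtain $(\ed^{\LE{}}_{X^v}\beta)(Z^c)=X^{\text{vert}}(\beta(Z^c))$, and then argue that this vanishes iff $\beta(Z^c)$ is a $\tau^E$-pullback. Your version is actually more complete than the paper's, which omits the verification that the resulting assignment $Z\mapsto h_Z$ is $C^\infty(M)$-linear; your use of $(fZ)^c=f^vZ^c+f^cZ^v$ to close that gap is exactly right.
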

\begin{proof}
Let be $Y\in\Gamma (\tau^E)$ and suppose that $\beta$ is basic. Then
obviously
$$(\ed^{\LE{}}_{X^v}\beta)(Y^v)=0.$$
 Now
$(\ed^{\LE{}}_{X^v}\beta)(Y^c)=\rho^{\LE{}}(X^v)(\beta(Y^c))$. Since
by hypothesis $\beta(Y^c)=(\beta'(Y))\circ\tau^E=(\beta'(Y))^v$, then
$$\rho^{\LE{}}(X^v)(\beta(Y^c))=\rho^{\LE{}}(X^v)(\beta'(Y))^v=0,$$ and therefore
$\ed^{\LE{}}_{X^v}\beta=0$.

Suppose now that $\beta$ is semi-basic, and that
$(\ed^{\LE{}}_{X^v}\beta)(Y^c)=\rho^{\LE{}}(X^v)(\beta(Y^c))=0$,
therefore $\beta(X^c)$ is the pullback by $\tau^E$ of a function in
$M$ and then $\beta$ is basic.
\end{proof}

\subsection{Equivalence of Lagrangians} One can define
different notions of equivalence classes of Lagrangian functions on Lie algebroids as the sets of Lagrangians
that produce the same Poincar\'e 2-section or the same dynamical
section $Z_L$. The stronger notion of gauge
equivalence will be related in the next section to the existence of
conserved quantities for the associated Lagrangian dynamics.
\begin{defn}\label{def equivalence}
Let $L,L'\in C^\infty (E)$ be two (regular) Lagrangians. We will say that $L$ is geometrically equivalent to $L'$ if $\w_L=\w_{L'}$. We will say that
$L$ is equivalent to $L'$  if $Z_L=Z_{L'}$.
\end{defn}

Let us note that since $\w_{L_1+L_2}=\w_{L_1}+\w_{L_2}$, we have that  $L$ and $L'$ are geometrically equivalent if and only if $L'=L+L_0$ with
$\w_{L_0}=0$, and hence $L_0$ must be singular. The next result characterises this class of singular Lagrangians with trivial Poincar\'e 2-forms.

\begin{thm}
A Lagrangian $L_0\in C^\infty(E)$ satisfies $\w_{L_0}=0$ if and only if
$$L_0=\widehat{\alpha}+V\circ\tau^E,$$
 where $\alpha\in\Omega^1(E)$ is
closed and $V\in C^\infty (M)$.
\end{thm}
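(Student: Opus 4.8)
The plan is to prove both implications separately, working in the local coordinates introduced in the paper and exploiting the explicit formula \eqref{localsymplecticform} for $\w_L$.

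For the easy direction ($\Leftarrow$), suppose $L_0=\widehat{\alpha}+V\circ\tau^E$ with $\ed^E\alpha=0$. Since $\w_{L_0}$ is additive in $L_0$, I would treat the two summands independently. For $V\circ\tau^E=V^v$, a direct computation shows $\Theta_{V^v}=\ed^{\LE{}}(V^v)\circ S$ vanishes on vertical sections and reduces to a basic $1$-form whose $\ed^{\LE{}}$-differential is zero; concretely, in coordinates $V^v$ depends only on the $x^i$, so all $y$-derivatives in \eqref{localsymplecticform} vanish, giving $\w_{V^v}=0$. For the term $\widehat{\alpha}$, I would write $\alpha=\alpha_\gamma\,e^\gamma$ locally so that $\widehat{\alpha}(a)=\alpha_\gamma(x)\,y^\gamma$; then $\partial^2\widehat{\alpha}/\partial y^\alpha\partial y^\beta=0$ kills the first term of \eqref{localsymplecticform}, while the coefficient of $\widetilde{T}^\alpha\wedge\widetilde{T}^\beta$ becomes
$$
\frac12\alpha_\gamma C^\gamma_{\alpha\beta}-\rho^i_\alpha\frac{\partial\alpha_\beta}{\partial x^i},
$$
which is precisely (up to antisymmetrisation) the local expression of $\ed^E\alpha$ evaluated on $e_\alpha,e_\beta$. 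Hence $\ed^E\alpha=0$ forces this coefficient to vanish, so $\w_{\widehat{\alpha}}=0$.

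For the harder direction ($\Rightarrow$), assume $\w_{L_0}=0$ and reconstruct the stated form of $L_0$. From \eqref{localsymplecticform}, vanishing of the $\widetilde{T}^\alpha\wedge\widetilde{V}^\beta$ coefficients forces $\partial^2 L_0/\partial y^\alpha\partial y^\beta=0$ for all $\alpha,\beta$, so $L_0$ is \emph{affine} in the fibre coordinates: $L_0=a_\gamma(x)\,y^\gamma+b(x)$ for some local functions $a_\gamma,b$. I then set $V=b$ (so $V\circ\tau^E=b^v$) and define $\alpha$ locally by $\alpha=a_\gamma\,e^\gamma$, so that $\widehat{\alpha}=a_\gamma y^\gamma$ and $L_0=\widehat{\alpha}+V^v$. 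The remaining content of $\w_{L_0}=0$ is the vanishing of the $\widetilde{T}^\alpha\wedge\widetilde{T}^\beta$ coefficient, which by the computation above is exactly $\ed^E\alpha=0$.

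The main obstacle is the global well-definedness of $\alpha$ and $V$: the decomposition $L_0=a_\gamma y^\gamma+b$ is produced chart by chart, and I must check the pieces patch into genuine global objects $\alpha\in\Omega^1(E)$ and $V\in C^\infty(M)$. The cleanest way is to avoid coordinates for this step: since $L_0$ is fibrewise affine, its fibre derivative defines a section of $E^*$, giving a global $1$-form $\alpha$ intrinsically by $\alpha(\tau^E(a))=\ed^{\text{fibre}}L_0\rrestr{a}$, and then $V:=(L_0-\widehat{\alpha})$ is constant along fibres, hence descends to a global function on $M$ via $\tau^E$. Once $\alpha$ is globally defined, closedness follows either from the local identity above together with surjectivity of $\tau^E$ (as in the proof of Proposition \ref{propbasic1}), or more invariantly by observing that $\w_{L_0}=0$ together with $\w_{V^v}=0$ gives $\w_{\widehat{\alpha}}=0$, and then relating $\Theta_{\widehat{\alpha}}$ to the basic $1$-form $\gamma(\alpha)$ so that $\ed^{\LE{}}\Theta_{\widehat{\alpha}}=0$ translates to $\ed^E\alpha=0$ through the bijection $\gamma$.
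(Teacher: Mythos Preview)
Your argument is correct, but it follows a different route from the paper's own proof. You work throughout in the local trivialisation $\{x^i,y^\alpha\}$ and read off the conditions $\w_{L_0}=0$ directly from the explicit formula \eqref{localsymplecticform}: the vanishing of the $\widetilde{T}^\alpha\wedge\widetilde{V}^\beta$ block forces $L_0$ to be fibrewise affine, and the vanishing of the $\widetilde{T}^\alpha\wedge\widetilde{T}^\beta$ block is then identified with $\ed^E\alpha=0$. The paper, by contrast, argues intrinsically: it computes $\Theta_{V\circ\tau^E}$ and $\Theta_{\widehat{\alpha}}$ on complete and vertical lifts, and for the converse uses that $\Theta_{L_0}$ is always semi-basic, so $\w_{L_0}=0$ makes it a \emph{closed} semi-basic $1$-form, whence Proposition~\ref{propbasic1} immediately produces a global closed $\alpha\in\Omega^1(E)$ with $\Theta_{L_0}=\gamma(\alpha)$; the identity $\rho^{\LE{}}(X^v)L_0=\alpha(X)\circ\tau^E$ then forces $L_0-\widehat{\alpha}$ to be $\tau^E$-basic.

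The trade-off is clear: your approach is concrete and makes the link between $\w_{L_0}=0$ and the structure of $L_0$ completely transparent, but the price is the extra globalisation step you correctly flag (and handle via the fibre derivative). The paper's approach gets the global $\alpha$ and its closedness in one stroke from Proposition~\ref{propbasic1}, at the cost of relying on that auxiliary result rather than on a direct computation. Your final paragraph essentially rediscovers the paper's method as an alternative; you could shorten the whole converse by adopting it from the start.
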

\begin{proof}First note that for any $X\in\Gamma(E)$ we have $\Theta_{V\circ\tau^E}(X^v)=0$ since vertical sections are in the kernel of the vertical
 endomorphism $S$. Also, $$\Theta_{V\circ\tau^E}(X^c)=\rho^{\LE{}}(S(X^c))(V\circ\tau^E)=\rho^{\LE{}}(X^v)(V\circ\tau^E)=0.$$
 Then
$\Theta_{V\circ\tau^E}=0$ and so $\w_{V\circ\tau^E}=0$.

For the contribution of $\alpha$ we have $\Theta_{\widehat{\alpha}}(X^v)=\ed^{\LE{}}\widehat{\alpha}(S(X^v))=0$ and
\begin{eqnarray*}\Theta_{\widehat{\alpha}}(X^c)(a) & = & \ed^{\LE{}}\widehat{\alpha}(S(X^c))(a)=\ed^{\LE{}}\widehat{\alpha}(X^v)=
X^\text{vert}(\widehat{\alpha})(a)\\ & = &
\frac{d}{dt}(\widehat{\alpha}(a+tX(\tau^E(a)))\restr{t=0}=\frac{d}{dt}\langle\alpha(\tau^E(a)),
a+tX(\tau^E(a))\rangle\restr{t=0}\\ & = & \alpha(X)(\tau^E(a)),\end{eqnarray*}
and therefore $\Theta_{\widehat{\alpha}}(X^c)=\alpha(X)\circ
\tau^E$. It is now straightforward to obtain
\begin{eqnarray*}\ed^{\LE{}}\Theta_{\widehat{\alpha}}(X^v,Y^v) & = & 0,\\
\ed^{\LE{}}\Theta_{\widehat{\alpha}}(X^c,Y^v) & = & -\rho^{\LE{}}(Y^v)(\alpha(X)\circ\tau^E)=0,\\
\ed^{\LE{}}\Theta_{\widehat{\alpha}}(X^c,Y^c) & = & \ed^E\alpha
(X,Y)\circ\tau^E=0,\end{eqnarray*} where the last term
vanishes since $\alpha$ is closed. Therefore
$\w_{\widehat{\alpha}}=\ed^{\LE{}}\Theta_{\widehat{\alpha}}=0$.

Conversely, since by its very definition \eqref{theta} we have that
$\Theta_{L_0}$ is semi-basic, if $\w_{L_0}$ vanishes then by
Proposition \ref{propbasic1}  $\Theta_{L_0}$ is basic and
$\Theta_{L_0}(X^c)=\alpha(X)\circ\tau^E$ for some closed 1-form
$\alpha\in\Omega^1(E)$. This is the same as
$$\ed^{\LE{}}L_0(S(X^c))=\rho^{\LE{}}(X^v)L_0=\alpha(X)\circ\tau^E,$$
which implies that $L_0=\widehat{\alpha}+V\circ\tau^E$, for any
$V\in C^\infty(M)$.
\end{proof}
\begin{defn} We  say that two regular Lagrangians $L,L'\in C^\infty(E)$ are gauge
equivalent if there exist $\alpha\in\Omega^1(E)$ and $V\in
C^\infty(M)$ such that $\ed^E\alpha=0,\ed^EV=0$ and
$L'=L+\widehat{\alpha}+V\circ\tau^E$.
\end{defn}
Obviously if $L$ and $L'$ are gauge equivalent, then they are
geometrically equivalent. Note that the condition $\ed^EV=0$ does
not necessarily implies that $V$ is locally constant, since it
suffices that $\ed V$ annihilates the image of the anchor $\rho^E$.
The next theorem gives the relationship between the three different
notions of equivalence of Lagrangians. We start with a necessary
technical lemma.

\begin{lemma}\label{lemadelta}
Let be $\alpha\in\Omega^1(E)$ and $\Delta$ be the Liouville section. Then
$\ed_\Delta^{\LE{}}\widehat{\alpha}=\widehat{\alpha}.$
\end{lemma}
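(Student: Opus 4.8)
The plan is to observe that $\widehat{\alpha}$ is linear on the fibres of $\tau^E$, hence homogeneous of degree one in the fibre variable, and that the Euler section $\Delta$ is precisely the infinitesimal generator of fibre dilations; the identity $\ed^{\LE{}}_\Delta\widehat{\alpha}=\widehat{\alpha}$ is then nothing but Euler's relation for a degree-one homogeneous function, transcribed into the algebroid language.

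First I would note that, since $\widehat{\alpha}$ is a function on $E$, the algebroid Lie derivative along $\Delta$ collapses to the anchor applied to it, by the defining property of $\ed^{\LE{}}$ on functions recalled in \eqref{magic formula}:
$$\ed^{\LE{}}_\Delta\widehat{\alpha}=\rho^{\LE{}}(\Delta)(\widehat{\alpha}).$$
Recalling that the anchor of $\LE{}$ is projection onto the second factor, $\rho^{\LE{}}(X)=X_2$, the definition $\Delta(a)=(0,a^v_a)$ gives $\rho^{\LE{}}(\Delta)(a)=a^v_a=\frac{d}{dt}(a+ta)\rrestr{t=0}$, the fibrewise Liouville (dilation) vector field on $E$. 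Evaluating the resulting directional derivative yields
$$\big(\ed^{\LE{}}_\Delta\widehat{\alpha}\big)(a)=\frac{d}{dt}\,\widehat{\alpha}(a+ta)\restr{t=0}.$$

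Next I would exploit that the dilation curve $t\mapsto a+ta$ stays inside a single fibre of $\tau^E$, so that $\tau^E(a+ta)=\tau^E(a)$ for all $t$. Using the defining formula $\widehat{\alpha}(b)=\langle\alpha(\tau^E(b)),b\rangle$ together with the linearity of the duality pairing in its second argument, I obtain
$$\widehat{\alpha}(a+ta)=\langle\alpha(\tau^E(a)),a+ta\rangle=(1+t)\,\widehat{\alpha}(a),$$
and differentiating at $t=0$ returns $\widehat{\alpha}(a)$, which is exactly the claim.

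There is essentially no obstacle here, as the whole argument is a one-line computation. The only points deserving care are the identification of $\rho^{\LE{}}(\Delta)$ with the fibrewise dilation vector field, and the observation that along $t\mapsto a+ta$ the base point is constant, so that the entire $t$-dependence of $\widehat{\alpha}(a+ta)$ sits in the linear slot of the pairing. An alternative, coordinate-based route would write $\widehat{\alpha}=\alpha_\beta\,y^\beta$ with $\alpha_\beta\in C^\infty(M)$ and note that $\rho^{\LE{}}(\Delta)=y^\beta\,\partial/\partial y^\beta$ in the basis \eqref{localbasis}, giving the same result at once; but the intrinsic computation above is cleaner and avoids choosing a local frame.
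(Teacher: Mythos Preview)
Your proof is correct and follows essentially the same route as the paper's: reduce the algebroid Lie derivative on functions to the anchor, identify $\rho^{\LE{}}(\Delta)$ with the fibrewise dilation vector field, and differentiate $\widehat{\alpha}(a+ta)=\langle\alpha(\tau^E(a)),a+ta\rangle$ at $t=0$ using linearity of the pairing. The paper's argument is the same computation written more tersely; your additional remarks on homogeneity and the coordinate alternative are sound but not needed.
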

\begin{proof}Let $a\in E$, then \begin{eqnarray*}
\ed_\Delta^{\LE{}}\widehat{\alpha} (a)& = & \rho^{\LE{}}(\Delta)\widehat{\alpha}(a) =\frac{d}{dt}\widehat{\alpha}(a+ta)\restr{t=0}\\
& = & \frac{d}{dt}\langle\alpha(\tau^E(a)),a +ta\rangle=\langle\alpha(\tau^E(a)),a\rangle\restr{t=0} = \widehat{\alpha}(a).
\end{eqnarray*}

\end{proof}

\begin{thm}\label{them gauge geo}
Two (regular) Lagrangians $L$ and $L'$ are gauge equivalent if and only if they are equivalent and geometrically equivalent.
\end{thm}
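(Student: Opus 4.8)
The plan is to reduce the whole statement to a single computation of the energy function of the difference $L_0=L'-L$. The implication gauge $\Rightarrow$ geometric is already recorded, and since $\w_{L_1+L_2}=\w_{L_1}+\w_{L_2}$, the preceding theorem characterising $\w_{L_0}=0$ tells us that geometric equivalence of $L$ and $L'$ amounts to $L'=L+\widehat{\alpha}+V\circ\tau^E$ with $\ed^E\alpha=0$ and $V\in C^\infty(M)$, while gauge equivalence is the very same statement together with the extra requirement $\ed^E V=0$. So once geometric equivalence is assumed, the only thing separating it from gauge equivalence is the condition $\ed^E V=0$, and the task is to show that this condition is exactly equivalent to $Z_L=Z_{L'}$.

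First I would compute $E_{L_0}$ for $L_0=\widehat{\alpha}+V\circ\tau^E$. Using Lemma \ref{lemadelta}, $\rho^{\LE{}}(\Delta)\widehat{\alpha}=\ed_\Delta^{\LE{}}\widehat{\alpha}=\widehat{\alpha}$, and since $\rho^{\LE{}}(\Delta)$ takes the vertical value $a_a^v$ at $a$, it annihilates the fibrewise-constant function $V\circ\tau^E$. Hence $E_{L_0}=\rho^{\LE{}}(\Delta)(L_0)-L_0=\widehat{\alpha}-(\widehat{\alpha}+V\circ\tau^E)=-V\circ\tau^E$. Next I would establish the elementary equivalence $\ed^{\LE{}}(V\circ\tau^E)=0\Leftrightarrow\ed^E V=0$: evaluating on the generating sections and using the structure relations \eqref{props prolongation}, one finds $\ed^{\LE{}}(V\circ\tau^E)(X^v)=0$ automatically and $\ed^{\LE{}}(V\circ\tau^E)(X^c)=(\rho^E(X)V)^v=(\ed^E V(X))^v$, so the left-hand side vanishes for every $X\in\Gamma(\tau^E)$ precisely when $\ed^E V=0$.

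With these two facts the theorem follows symmetrically. Because $L$ and $L'$ are regular and $\w_L=\w_{L'}$ is non-degenerate, the defining equation \eqref{EL} shows that $Z_L=Z_{L'}$ holds if and only if $\ed^{\LE{}}E_L=\ed^{\LE{}}E_{L'}$, that is $\ed^{\LE{}}E_{L_0}=0$; by the energy computation this reads $\ed^{\LE{}}(V\circ\tau^E)=0$, hence by the second fact it is equivalent to $\ed^E V=0$. Therefore, under the standing assumption of geometric equivalence, $Z_L=Z_{L'}$ holds if and only if $\ed^E V=0$, which is exactly the extra condition distinguishing gauge from geometric equivalence. This yields both implications at once: if $L$ and $L'$ are gauge equivalent then they are geometrically equivalent (recorded) and, since $\ed^E V=0$, also equivalent; conversely, geometric plus dynamical equivalence forces $L_0=\widehat{\alpha}+V\circ\tau^E$ with $\ed^E\alpha=0$ and additionally $\ed^E V=0$, which is gauge equivalence.

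The main obstacle is the energy computation $E_{L_0}=-V\circ\tau^E$, and in particular the careful verification that $\rho^{\LE{}}(\Delta)(V\circ\tau^E)=0$ while $\rho^{\LE{}}(\Delta)\widehat{\alpha}=\widehat{\alpha}$; once this identity is combined with the non-degeneracy of $\w_L$ through \eqref{EL}, the two directions become immediate and mirror one another. Everything else is bookkeeping with the relations \eqref{props prolongation} and the observation that $\ed^E V(X)=\rho^E(X)V$.
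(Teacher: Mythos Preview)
Your proof is correct and follows essentially the same route as the paper: both compute $E_{L'}-E_L=-V\circ\tau^E$ via Lemma~\ref{lemadelta} and the verticality of $\rho^{\LE{}}(\Delta)$, then use non-degeneracy of $\w_L=\w_{L'}$ to translate $Z_L=Z_{L'}$ into $\ed^{\LE{}}(V\circ\tau^E)=0$, i.e.\ $\ed^E V=0$. Your version is slightly more explicit in justifying the equivalence $\ed^{\LE{}}(V\circ\tau^E)=0\Leftrightarrow\ed^E V=0$ via \eqref{props prolongation}, and organizes the two implications symmetrically, but the substance is the same.
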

\begin{proof}
If $L$ and $L'$ are gauge equivalent, then $L'=L+\widehat{\alpha}+V\circ\tau^E$ with $\alpha$ and $V$ closed by $\ed^E$. Then $\w_L=\w_{L'}$.
Notice also that since $\rho^{\LE{}}(\Delta)$ is a vector field tangent to the $\tau^E$-fibers of $E$, then $\ed^{\LE{}}_\Delta (V\circ\tau^E)=0$.
Then, for the energies of both Lagrangians,
$$E_{L'}=\ed^{\LE{}}_\Delta L'-L'=\ed^{\LE{}}_\Delta L+\ed^{\LE{}}_\Delta\widehat{\alpha}-L-\widehat{\alpha}-V\circ\tau^E=E_L-V\circ\tau^E,$$
where we used Lemma \ref{lemadelta}. Now, since by hypothesis $V$ is
closed, $\ed^{\LE{}}(V\circ\tau^E)=0$ and then
$\ed^{\LE{}}E_L=\ed^{\LE{}}E_{L'}$, which implies $Z_{L'}=Z_{L}$.

Conversely, if $L$ and $L'$ are geometrically equivalent, then
$L'=L+\widehat{\alpha}+V\circ\tau^E$ with $\ed^E\alpha=0$. Then,
proceeding as before, $E_{L'}=E_L-V\circ\tau^E$. Since
$\w_L=\w_{L'}$ and both Lagrangians are regular, the equivalence of $L$
and $L'$ forces $\ed^{\LE{}}E_{L'}= \ed^{\LE{}}E_{L}$, implying
$\ed^{\LE{}}(V\circ\tau^E)=0$, which is equivalent to $\ed^EV=0$.
\end{proof}

\section{Gauge Equivalence and N\"other's Theorem}
In this section we will study sections in $\Gamma(\tau^{\LE{}})$ that generate
one-parameter families of gauge equivalent Lagrangian functions. In Theorem
\ref{noether} we will show that  one can associate a N\"other conserved
quantity of the dynamics to each such a family.

\subsection{Admissible sections} We will start by studying admissible
sections, that is, those sections in $\Gamma(\tau^{\LE{}})$
preserving the different objects of the dynamical equation \eqref{EL}.
\begin{defn}
Let $X\in\Gamma(\tau^{\LE{}})$. We  say that $X$ is an admissible section for
$\w_L$
(respectively for $\Theta_L,\,E_L$)
if $\ed^{\LE{}}_X\w_L=\w_{\ed^{\LE{}}_XL}$  for all $L\in C^\infty(E)$ (respectively
$\ed^{\LE{}}_X\Theta_L=\Theta_{\ed^{\LE{}}_XL},\,\ed^{\LE{}}_XE_L=E_{\ed^{\LE{}}_XL}$).
\end{defn}
We start by studying the transformation of the elements in \eqref{EL} under the action of arbitrary sections in $\Gamma(\tau^{\LE{}})$.
\begin{lemma}\label{lemachanges}
Let $X,Y\in\Gamma(\tau^{\LE{}})$. Then,
\begin{eqnarray*}
(\ed^{\LE{}}_X\Theta_L)(Y) & = & \Theta_{\ed^{\LE{}}_XL}(Y)+\ed^{\LE{}}L\left([X,S(Y)]_{\LE{}}-S([X,Y]_{\LE{}})\right)\\
\ed^{\LE{}}_X\w_L & = & -\ed^{\LE{}}\ed^{\LE{}}_X\Theta_L\\
\ed^{\LE{}}_XE_L & = & E_{\ed^{\LE{}}_XL}+\ed^{\LE{}}_{[X,\Delta]_{\LE{}}}L
\end{eqnarray*}
\end{lemma}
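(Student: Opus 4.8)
The plan is to verify each of the three identities directly from the definitions in \eqref{theta}--\eqref{omega} and the energy function, using the derivation properties of $\ed^{\LE{}}_X$ (Leibniz rule over $\wedge$, commutation with $\ed^{\LE{}}$, and the Cartan-type relations collected after \eqref{magic formula}). The central tool is that $\ed^{\LE{}}_X$ is a degree-zero derivation that commutes with the exterior differential $\ed^{\LE{}}$ but does \emph{not} in general commute with the vertical endomorphism $S$; tracking that failure to commute is what produces the correction terms. So before anything else I would record the ``product rule'' for the Lie derivative acting on a contraction: for any 1-form $\Theta$ and section $Y$, $\ed^{\LE{}}_X(\Theta(Y)) = (\ed^{\LE{}}_X\Theta)(Y) + \Theta([X,Y]_{\LE{}})$, which follows from $\Theta(Y)=\iota_Y\Theta$ together with $[\ed^{\LE{}}_X,\iota_Y]=\iota_{[X,Y]_{\LE{}}}$.

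For the first identity I would start from $\Theta_L(Y)=\ed^{\LE{}}L(S(Y))=\rho^{\LE{}}(S(Y))L$, so that $(\ed^{\LE{}}_X\Theta_L)(Y)=\rho^{\LE{}}(X)\big(\Theta_L(Y)\big)-\Theta_L([X,Y]_{\LE{}})$. Expanding the first term via the product rule above and writing everything in terms of $\ed^{\LE{}}L$ evaluated on various sections, I would regroup: the piece involving $\ed^{\LE{}}_X L$ assembles into $\Theta_{\ed^{\LE{}}_XL}(Y)=\ed^{\LE{}}(\ed^{\LE{}}_XL)(S(Y))$, and the leftover is exactly $\ed^{\LE{}}L$ applied to $[X,S(Y)]_{\LE{}}-S([X,Y]_{\LE{}})$, which is the obstruction measuring $[\ed^{\LE{}}_X,S]$ acting on $Y$. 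The second identity is the quickest: since $\w_L=-\ed^{\LE{}}\Theta_L$ and $\ed^{\LE{}}_X$ commutes with $\ed^{\LE{}}$ (listed property $[\ed^E,\ed^E_X]=0$, which transfers to $\LE{}$), we get $\ed^{\LE{}}_X\w_L=-\ed^{\LE{}}_X\ed^{\LE{}}\Theta_L=-\ed^{\LE{}}\ed^{\LE{}}_X\Theta_L$ immediately.

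For the third identity I would use $E_L=\rho^{\LE{}}(\Delta)(L)-L=\ed^{\LE{}}_\Delta L-L$. Applying $\ed^{\LE{}}_X$ and invoking $\ed^{\LE{}}_X\ed^{\LE{}}_\Delta=\ed^{\LE{}}_{[X,\Delta]_{\LE{}}}+\ed^{\LE{}}_\Delta\ed^{\LE{}}_X$ (the bracket-commutator rule $\ed^E_{[X,Y]_E}=[\ed^E_X,\ed^E_Y]$ on functions), I rewrite $\ed^{\LE{}}_X E_L=\ed^{\LE{}}_{[X,\Delta]_{\LE{}}}L+\ed^{\LE{}}_\Delta(\ed^{\LE{}}_X L)-\ed^{\LE{}}_X L$; the last two terms are precisely $E_{\ed^{\LE{}}_X L}$, giving the stated result. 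The only genuine obstacle is the bookkeeping in the first identity: one must be careful that $S$ is applied to $Y$ \emph{inside} an evaluation of $\ed^{\LE{}}L$ rather than to a differential form, so the correction term is the section $[X,S(Y)]_{\LE{}}-S([X,Y]_{\LE{}})$ fed into $\ed^{\LE{}}L$, and I would double-check signs and the placement of the anchor $\rho^{\LE{}}$ when passing between $\ed^{\LE{}}L(\cdot)$ and $\rho^{\LE{}}(\cdot)L$. Everything else is routine application of the Cartan calculus already established for $\LE{}$.
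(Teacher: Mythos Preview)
Your proposal is correct and follows essentially the same approach as the paper: both use the Leibniz rule $(\ed^{\LE{}}_X\Theta_L)(Y)=\ed^{\LE{}}_X(\Theta_L(Y))-\Theta_L([X,Y]_{\LE{}})$ together with the commutation $[\ed^{\LE{}}_X,\ed^{\LE{}}]=0$ to isolate the $S$-obstruction term in the first identity, derive the second identity immediately from $\w_L=-\ed^{\LE{}}\Theta_L$, and use the commutator relation $\ed^{\LE{}}_X\ed^{\LE{}}_\Delta=\ed^{\LE{}}_\Delta\ed^{\LE{}}_X+\ed^{\LE{}}_{[X,\Delta]_{\LE{}}}$ on functions for the third. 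The only difference is presentational: you frame the correction term explicitly as $[\ed^{\LE{}}_X,S](Y)$, which the paper leaves implicit.
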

\begin{proof}
For the first property, \begin{eqnarray*}
(\ed^{\LE{}}_X\Theta_L)(Y) & = & \ed^{\LE{}}_X(\Theta_L(Y))-\Theta_L([X,Y]_{\LE{}})\\
& = & \ed^{\LE{}}_X ( \ed^{\LE{}}L(S(Y))-\ed^{\LE{}}L(S([X,Y]_{\LE{}}))\\
 & = & (\ed^{\LE{}}_X\ed^{\LE{}}L)(S(Y))+\ed^{\LE{}}L(\ed^{\LE{}}_XS(Y))-\ed^{\LE{}}L(S([X,Y]_{\LE{}}))\\
& = & (\ed^{\LE{}}\ed^{\LE{}}_XL)(S(Y))+\ed^{\LE{}}L([X,S(Y)]_{\LE{}}-S([X,Y]_{\LE{}}))\\
& = & \Theta_{\ed^{\LE{}}_XL}(Y)+\ed^{\LE{}}L([X,S(Y)]_{\LE{}}-S([X,Y]_{\LE{}})).
\end{eqnarray*}
The second property is immediate since $\w_L=-\ed^{\LE{}}\Theta_L$ and $[\ed^{\LE{}}_X,\ed^{\LE{}}]=0$. For the last one,
\begin{eqnarray*}
\ed^{\LE{}}_XE_L  & = & \ed^{\LE{}}_X\ed^{\LE{}}_\Delta L-\ed^{\LE{}}_XL\\
& = & \ed^{\LE{}}_\Delta \ed^{\LE{}}_XL-\ed^{\LE{}}_XL+\ed^{\LE{}}_{[X,\Delta]}L\\
& = & E_{\ed^{\LE{}}_XL} + \ed^{\LE{}}_{[X,\Delta]_{\LE{}}}L.
\end{eqnarray*}
\end{proof}
We can see that not every section is admissible in any of the three
senses due to the presence of additional terms, so one must impose
restrictions on them. An important class of admissible
sections is given by complete lifts, as the next result shows.
\begin{prop}\label{completeadmissible}
For any $X\in \Gamma(\tau^E)$ its complete lift $X^c$ is an admissible section for $\w_L,\Theta_L$ and $E_L$.
\end{prop}
\begin{proof}
Since complete and vertical lifts generate $\Gamma(\tau^{\LE{}})$, in the hypothesis of Lemma \ref{lemachanges} we
can make $Y=a^c+b^v$ with
$a,b\in\Gamma(\tau^E)$. Then, using the standard properties
of lifts of sections and of the vertical endomorphism,$$
[X^c,Y]_{\LE{}}  =  [X^c,a^c+b^v]_{\LE{}}=([X,a]_E)^c+([X,b]_E)^v,$$ and
therefore $S([X^c,Y]_{\LE{}})=([X,a]_{\LE{}})^v$. Now, since
$S(Y)=a^v$ we have that
$[X^c,S(Y)]_{\LE{}}=[X^c,a^v]_{\LE{}}=([X,a]_E)^v$, and
hence
$$S([X^c,Y]_{\LE{}})=[X^c,S(Y)]_{\LE{}},$$
which proves that $X^c$ is an admissible section for $\w_L$ and $\theta_L$. Finally, that $X^c$ is an
admissible section for $E_L$ follows from $[X^c,\Delta]_{\LE{}}=0$ (see Lemma \ref{liouvillecomplete}). \end{proof}
In view of Proposition \ref{completeadmissible}  we hereafter restrict ourselves to those admissible sections
which are complete lifts of sections of $\tau^E$. We are specially interested in sections which are symmetries of the dynamics, i.e. such that $[X^c,Z_L]_{\LE{}}=0$.
We start by identifying a necessary condition for such a section.

\begin{prop}
Let $X\in\Gamma(\tau^E)$ satisfy $[X^c,Z_L]_{\LE{}}=0$. Then there is a Lagrangian $L'$ defined by $L'=\ed^{\LE{}}_{X^c}L$ such that
$$\iota_{Z_L}\w_{L'}=\ed^{\LE{}}E_{L'}.$$
If $L'$ is regular, then $L$ is equivalent to $L'$, i.e. $Z_{L}=Z_{L'}$.
\end{prop}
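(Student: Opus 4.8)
The plan is to apply the Lie derivative $\ed^{\LE{}}_{X^c}$ to both sides of the dynamical equation $\iota_{Z_L}\w_L=\ed^{\LE{}}E_L$ and to simplify each side using the admissibility of complete lifts (Proposition \ref{completeadmissible}) together with the commutation identities recorded in Section 2. First note that $L'=\ed^{\LE{}}_{X^c}L=\rho^{\LE{}}(X^c)L\in C^\infty(E)$ is again a function on $E$, hence a legitimate Lagrangian. Since $X^c$ is a complete lift, Proposition \ref{completeadmissible} guarantees the admissibility relations $\ed^{\LE{}}_{X^c}\w_L=\w_{\ed^{\LE{}}_{X^c}L}=\w_{L'}$ and $\ed^{\LE{}}_{X^c}E_L=E_{\ed^{\LE{}}_{X^c}L}=E_{L'}$, which are the two facts I would feed into the computation.

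For the left-hand side I would invoke the prolongation version of the Cartan-type identity $\iota_{[X,Y]_E}=[\ed^E_X,\iota_Y]$, namely $[\ed^{\LE{}}_{X^c},\iota_{Z_L}]=\iota_{[X^c,Z_L]_{\LE{}}}$, which gives
$$\ed^{\LE{}}_{X^c}(\iota_{Z_L}\w_L)=\iota_{Z_L}\bigl(\ed^{\LE{}}_{X^c}\w_L\bigr)+\iota_{[X^c,Z_L]_{\LE{}}}\w_L.$$
Here the hypothesis $[X^c,Z_L]_{\LE{}}=0$ annihilates the correction term, and substituting $\ed^{\LE{}}_{X^c}\w_L=\w_{L'}$ reduces the left-hand side to $\iota_{Z_L}\w_{L'}$. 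For the right-hand side I would use the property $[\ed^{\LE{}},\ed^{\LE{}}_{X^c}]=0$ to commute the two operators, obtaining $\ed^{\LE{}}_{X^c}(\ed^{\LE{}}E_L)=\ed^{\LE{}}(\ed^{\LE{}}_{X^c}E_L)=\ed^{\LE{}}E_{L'}$. Equating the two transformed sides yields $\iota_{Z_L}\w_{L'}=\ed^{\LE{}}E_{L'}$, which is the first assertion.

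The second assertion is then immediate: if $L'$ is regular, then $\w_{L'}$ is non-degenerate, so the equation $\iota_{Z}\w_{L'}=\ed^{\LE{}}E_{L'}$ admits a unique solution, namely $Z_{L'}$. But the computation above exhibits $Z_L$ as a solution of exactly this equation, so uniqueness forces $Z_L=Z_{L'}$, i.e. $L$ and $L'$ are equivalent in the sense of Definition \ref{def equivalence}. The only point requiring genuine care is the correct transport of the operator identity $\iota_{[X,Y]_E}=[\ed^E_X,\iota_Y]$ to the prolongation algebroid $\LE{}$ and the recognition that the vanishing bracket $[X^c,Z_L]_{\LE{}}=0$ is precisely what is needed to cancel the extra term; once this is in place, everything else is routine bookkeeping with the admissibility relations and the identity $[\ed^{\LE{}},\ed^{\LE{}}_{X^c}]=0$.
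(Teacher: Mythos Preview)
Your proof is correct and follows essentially the same approach as the paper: both apply $\ed^{\LE{}}_{X^c}$ to the dynamical equation, use the Leibniz/commutator identity $\ed^{\LE{}}_{X^c}(\iota_{Z_L}\w_L)=\iota_{Z_L}(\ed^{\LE{}}_{X^c}\w_L)+\iota_{[X^c,Z_L]_{\LE{}}}\w_L$ together with the hypothesis $[X^c,Z_L]_{\LE{}}=0$, and then invoke the admissibility relations from Proposition~\ref{completeadmissible} and the commutation $[\ed^{\LE{}},\ed^{\LE{}}_{X^c}]=0$. The paper simply phrases the same Leibniz rule in the form $(\ed^{\LE{}}_Y\w_L)(Z_L)+\w_L(\ed^{\LE{}}_YZ_L)=\ed^{\LE{}}\ed^{\LE{}}_YE_L$, noting that $\ed^{\LE{}}_YZ_L=[Y,Z_L]_{\LE{}}$.
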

\begin{proof}
First notice that, for any section $Y\in\Gamma(\tau^{\LE{}})$ we have, applying $\ed^{\LE{}}_Y$ to \eqref{EL},
$$(\ed^{\LE{}}_Y\w_L)(Z_L)+\w_L(\ed^{\LE{}}_YZ_L)=\ed^{\LE{}}\ed^{\LE{}}_YE_L.$$
The previous equation reduces, when using $Y=X^c$, Proposition \ref{completeadmissible} and the fact that $X^c$ is a symmetry of $Z_L$, to
$$\iota_{Z_L}\w_{\ed^{\LE{}}_{X^c}L}=\ed^{\LE{}}E_{\ed^{\LE{}}_{X^c}L}.$$
Calling $L'=\ed^{\LE{}}_{X^c}L$ and assuming that $L'$ is regular, this implies $Z_{L'}=Z_L.$
\end{proof}
\subsection{One-Parameter Families of Gauge Equivalent Lagrangians} We now study  how imposing a condition
on a complete lift of a section (which is an admissible section by
Proposition \ref{completeadmissible}) we can generate a
one-parameter family of Lagrangians which are gauge equivalent to
a given one.
\begin{thm}\label{thm one param gauge}
Let $L\in C^\infty(E)$ and $X\in\Gamma (\tau^{E})$. Assume that
$\ed^{\LE{}}_{X^c}L=\widehat{\beta}+W\circ\tau^E$ with
$\ed^E\beta=0$ and $\ed^EW=0$. Then for each $t\in \R$ such that the  flow of
$\rho^{\LE{}}(X^c)$,
$\phi^t_{\rho^{\LE{}}(X^c)}$, is defined, the Lagrangian  $L_t=L\circ \phi^t_{\rho^{\LE{}}(X^c)}$ is gauge
equivalent to $L$.
\end{thm}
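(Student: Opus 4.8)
The plan is to differentiate $L_t$ in $t$, recognise the derivative as a gauge-type term transported along the flow, and integrate. Write $X^{\text{comp}}:=\rho^{\LE{}}(X^c)\in\mathfrak{X}(E)$ (the anchor of $\LE{}$ sends $X^c$ to its complete-lift vector field $X^{\text{comp}}$), so that $\phi^t:=\phi^t_{\rho^{\LE{}}(X^c)}$ is the flow of $X^{\text{comp}}$. Since $\ed^{\LE{}}_{X^c}L=\rho^{\LE{}}(X^c)L=X^{\text{comp}}(L)$ and $\phi^t$ is the flow of $X^{\text{comp}}$, one has $\frac{d}{dt}(f\circ\phi^t)=(X^{\text{comp}}f)\circ\phi^t$ for every $f\in C^\infty(E)$, and in particular, by the hypothesis,
\[
\frac{d}{dt}L_t=(\ed^{\LE{}}_{X^c}L)\circ\phi^t=(\widehat\beta+W\circ\tau^E)\circ\phi^t .
\]
It therefore suffices to understand how the two summands are carried by $\phi^t$.

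First I would dispatch the term $W\circ\tau^E$. Because $X^{\text{comp}}$ is $\tau^E$-projectable with ${\tau^E}_*(X^{\text{comp}})=\rho^E(X)$, the flow covers the flow $\psi^t$ of $\rho^E(X)$ on $M$, i.e. $\tau^E\circ\phi^t=\psi^t\circ\tau^E$. Since $\ed^E W=0$ gives $\rho^E(X)W=0$, the function $W$ is $\psi^t$-invariant, whence $(W\circ\tau^E)\circ\phi^t=W\circ\tau^E$ for all $t$. For the term $\widehat\beta$ I would define $\alpha_t\in\Omega^1(E)$ as the solution of the transport equation $\frac{d}{dt}\alpha_t=\ed^E_X\alpha_t$ with $\alpha_0=\beta$. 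Using the defining relation $X^{\text{comp}}(\widehat\sigma)=\widehat{\ed^E_X\sigma}$ of the complete lift, both $\widehat{\alpha_t}$ and $\widehat\beta\circ\phi^t$ satisfy the same linear evolution $\dot u=X^{\text{comp}}(u)$ in $C^\infty(E)$ with the same initial datum $\widehat\beta$, so by uniqueness $\widehat{\alpha_t}=\widehat\beta\circ\phi^t$. (Concretely $\alpha_t$ is the fibrewise-linear pullback $(\phi^t)^*\beta$, $\phi^t$ being a vector-bundle automorphism of $E$.)

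The heart of the argument is to show that each $\alpha_t$ is $\ed^E$-closed. Applying $\ed^E$ to the transport equation and invoking the commutation rule $[\ed^E,\ed^E_X]=0$ recorded among the properties of the Lie derivative in Section 2, the two-form $\ed^E\alpha_t$ satisfies the same linear evolution $\frac{d}{dt}(\ed^E\alpha_t)=\ed^E_X(\ed^E\alpha_t)$ with initial value $\ed^E\alpha_0=\ed^E\beta=0$; by uniqueness $\ed^E\alpha_t\equiv0$. (Equivalently, $\phi^t$ is a Lie-algebroid automorphism, so $(\phi^t)^*$ commutes with $\ed^E$ and $\ed^E\alpha_t=(\phi^t)^*\ed^E\beta=0$.) Integrating the derivative formula from $0$ to $t$ and combining the two computations then yields
\[
L_t=L+\widehat{\,\textstyle\int_0^t\alpha_s\,ds\,}+t\,W\circ\tau^E .
\]
Setting $\alpha:=\int_0^t\alpha_s\,ds$ and $V:=tW$, the form $\alpha$ is closed as an integral of closed forms, while $\ed^E V=t\,\ed^E W=0$; hence $L_t=L+\widehat\alpha+V\circ\tau^E$ with $\ed^E\alpha=0$ and $\ed^E V=0$, which is precisely gauge equivalence.

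I expect the main obstacle to be the two structural facts about the flow $\phi^t$: that it is fibrewise linear, so that $\widehat\beta\circ\phi^t$ is again of the form $\widehat{\alpha_t}$ for a genuine $\alpha_t\in\Omega^1(E)$, and---more importantly---that it preserves $\ed^E$-closedness. The first is the statement that complete lifts are linear vector fields; the second is where the hypothesis $\ed^E\beta=0$ is genuinely used, and it is exactly the commutation $[\ed^E,\ed^E_X]=0$ (equivalently, the automorphism property of the prolonged flow) that keeps the transported form $\alpha_t$ within the closed forms. One should also note that $\int_0^t\alpha_s\,ds$ is defined only on the range of $t$ for which $\phi^t$ exists, matching the hypothesis of the theorem.
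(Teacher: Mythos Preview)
Your proof is correct and follows essentially the same route as the paper: differentiate $L_t$ along the flow, identify the derivative as the gauge datum $(\widehat\beta+W\circ\tau^E)$ transported by $\phi^t$, and integrate using that $\phi^t$ is a Lie algebroid automorphism (equivalently, $[\ed^E,\ed^E_X]=0$) to keep the transported one-form closed. The only substantive difference is your cleaner handling of the $W$-term: you observe that $\ed^EW=0$ forces $\rho^E(X)W=0$, so $W$ is $\psi^t$-invariant and the integral collapses to $tW$, whereas the paper leaves it as $\overline W=\int_0^t\psi^{t'*}W\,dt'$ without exploiting this simplification.
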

\begin{proof}
By the formula for the relationship between a vector field and its flow, we have
$$(\rho^{\LE{}}(X^c)L)\circ\phi^t_{\rho^{\LE{}}(X^c)}=\frac{d}{dt}(L\circ\phi^t_{\rho^{\LE{}}(X^c)}).$$
Using $\rho^{\LE{}}(X^c)L=\ed^{\LE{}}_{X^c}L=\widehat{\beta}+W\circ\tau^E$, $L_t=L\circ \phi^t_{X^c}$ and $\rho^{\LE{}}(X^c)=X^\text{comp}$
this is equivalent to \begin{equation}
\label{parameterequation}\frac{d}{dt}L_t =\widehat{\beta}(\phi^t_{X^\text{comp}})+W(\tau^E(\phi^t_{X^\text{comp}})),\end{equation}
with the initial condition $L_0=L$. Using the fact that the flows of $X^\text{comp}$ and $\rho^E(X)$ are $\tau^E$-related, we have that
\eqref{parameterequation} is equivalent to
$$\frac{d}{dt}L_t(a)=\langle\beta (\phi^t_{\rho^E(X)}(x)),(\phi^t_{X^\text{comp}}(a)\rangle+W(\phi^t_{\rho^E(X)}(x)),\quad\forall x\in M,\,a\in{\tau^E}^{-1}(x).$$
Note that for fixed $t$ the  flow $\phi^t_{X^\text{comp}}$ is an automorphism
of $E$ covering $\phi^t_{\rho^E(X)}$, and
thus $[\ed^E,\phi^{t*}_{X^\text{comp}}]=0$ for all $t$,  where $\phi^{t*}_{X^\text{comp}}\beta$ is defined by
$\langle\phi^{t*}_{X^\text{comp}}\beta,a\rangle=\langle\beta (\phi^t_{\rho^E(X)}(x)),(\phi^t_{X^\text{comp}}(a)\rangle$, with $\beta\in\Omega^1(E)$. Therefore
 solving \eqref{parameterequation} with the initial condition $L_0=L$ we obtain
$$L_t(a)  =  L(a)+\widehat{\overline{\beta}}(a)+\overline{W}\circ\tau^E,$$
where $\langle\overline{\beta},a\rangle=\int_0^t\langle\phi^{t'*}_{X^\text{comp}}\beta,a\rangle dt'$ and $\overline{W}(a)=\int_0^t
(\phi^{t'*}_{\rho^E(X)}W)(x)dt'$.
Now, since $\ed^E$ commutes with $\phi^{t*}_{X^\text{comp}}$ we have $\ed^E\overline{\beta}=0$ and $\ed^E\overline{W}=0$,
from the closeness of $\beta$ and $W$.
\end{proof}

\subsection{N\"other's Theorem} We now prove a N\"other theorem for Lagrangian mechanics on Lie algebroids.
If the complete lift of a section in $\Gamma(\tau^E)$ generates a one-parameter family of gauge equivalent Lagrangians with the particular data $\beta=d^Eh, W=0$ using the notation of Theorem \ref{thm one param gauge}, then we can associate to this family a conserved quantity for the dynamics.

\begin{thm}\label{noether}
Let $X\in\Gamma(\tau^E)$. If there exists $h\in C^\infty(M)$ such that
$$\ed^{\LE{}}_{X^c}L=\widehat{\ed^Eh},$$
then for any $K\in C^\infty (E)$ with $\ed^{\LE{}}K=0$, the function
$$f=\iota_{X^c}\Theta_L-h\circ\tau^E+K$$
is a conserved quantity for the dynamics associated to $L$.
\end{thm}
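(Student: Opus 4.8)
The plan is to show that $\ed^{\LE{}}_{Z_L}f=0$, since a function is conserved along the dynamics precisely when it is annihilated by the Lie derivative along $Z_L$ (equivalently, constant along the integral curves of $\rho^{\LE{}}(Z_L)$). I would compute this derivative term by term on $f=\iota_{X^c}\Theta_L-h\circ\tau^E+K$. The term $K$ is immediate: since $\ed^{\LE{}}K=0$ by hypothesis, $\ed^{\LE{}}_{Z_L}K=\iota_{Z_L}\ed^{\LE{}}K=0$. So the whole content lies in showing that the first two terms combine to a constant of motion.

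\medskip

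\noindent For the main term I would use the Cartan-style identity $\ed^{\LE{}}_{Z_L}(\iota_{X^c}\Theta_L)=\iota_{X^c}\ed^{\LE{}}_{Z_L}\Theta_L+\iota_{[Z_L,X^c]_{\LE{}}}\Theta_L$, together with $\ed^{\LE{}}_{Z_L}\Theta_L=\iota_{Z_L}\ed^{\LE{}}\Theta_L+\ed^{\LE{}}\iota_{Z_L}\Theta_L$ from the magic formula \eqref{magic formula}. Since $\omega_L=-\ed^{\LE{}}\Theta_L$ and $\iota_{Z_L}\omega_L=\ed^{\LE{}}E_L$ by \eqref{EL}, the first piece is $-\iota_{X^c}\iota_{Z_L}\ed^{\LE{}}\Theta_L=\iota_{X^c}\ed^{\LE{}}E_L=\rho^{\LE{}}(X^c)(E_L)=\ed^{\LE{}}_{X^c}E_L$. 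The bracket term $\iota_{[Z_L,X^c]_{\LE{}}}\Theta_L$ vanishes because $X^c$ is assumed to be a symmetry in the relevant sense; more carefully, I expect to invoke the hypothesis $\ed^{\LE{}}_{X^c}L=\widehat{\ed^Eh}$ rather than symmetry of $Z_L$ directly, so I would instead rewrite things to avoid needing $[X^c,Z_L]_{\LE{}}=0$.

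\medskip

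\noindent The cleaner route, which I would actually carry out, is to work directly with the energy identity. Because $X^c$ is an admissible section for $\Theta_L$ and $E_L$ (Proposition \ref{completeadmissible}), we have $\ed^{\LE{}}_{X^c}\Theta_L=\Theta_{\ed^{\LE{}}_{X^c}L}$ and $\ed^{\LE{}}_{X^c}E_L=E_{\ed^{\LE{}}_{X^c}L}$. Now I would compute the Poincaré $1$-form and the energy of the Lagrangian $L':=\ed^{\LE{}}_{X^c}L=\widehat{\ed^Eh}$. By the computation in the proof of the theorem characterising $\w_{L_0}=0$, we have $\Theta_{\widehat{\alpha}}(Y^c)=\alpha(Y)\circ\tau^E$ and $\Theta_{\widehat{\alpha}}(Y^v)=0$; applying this with $\alpha=\ed^Eh$ identifies $\Theta_{L'}$ as a basic form. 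Similarly, using Lemma \ref{lemadelta}, $\ed^{\LE{}}_\Delta\widehat{\ed^Eh}=\widehat{\ed^Eh}$, so $E_{L'}=\ed^{\LE{}}_\Delta L'-L'=0$. Thus $\ed^{\LE{}}_{X^c}E_L=E_{L'}=0$, i.e. $\rho^{\LE{}}(X^c)(E_L)=0$.

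\medskip

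\noindent Assembling these, $\ed^{\LE{}}_{Z_L}(\iota_{X^c}\Theta_L)=\iota_{X^c}\ed^{\LE{}}_{Z_L}\Theta_L+\iota_{[Z_L,X^c]_{\LE{}}}\Theta_L$, and I would expand $\ed^{\LE{}}_{Z_L}\Theta_L$ as above to produce a term $\iota_{X^c}\ed^{\LE{}}E_L=\ed^{\LE{}}_{X^c}E_L=0$ plus a total derivative $\iota_{X^c}\ed^{\LE{}}(\iota_{Z_L}\Theta_L)=\rho^{\LE{}}(X^c)(\iota_{Z_L}\Theta_L)$; meanwhile $\ed^{\LE{}}_{Z_L}(h\circ\tau^E)=\rho^{\LE{}}(Z_L)(h\circ\tau^E)=\widehat{\ed^Eh}(Z_L)$ evaluated via the SODE property of $Z_L$ matches the leftover contribution. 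The main obstacle I anticipate is bookkeeping the non-symmetric bracket term $\iota_{[Z_L,X^c]_{\LE{}}}\Theta_L$ and the $\iota_{Z_L}\Theta_L$ total-derivative term so that they cancel against $\rho^{\LE{}}(Z_L)(h\circ\tau^E)$; the hypothesis $\ed^{\LE{}}_{X^c}L=\widehat{\ed^Eh}$ must be used precisely to force this cancellation, so I would keep $h$ explicit throughout and verify the cancellation by pairing $\Theta_L$ and $\widehat{\ed^Eh}$ against the SODE $Z_L$ using $\iota_{Z_L}\Theta_L=\ed^{\LE{}}_\Delta L$ type relations, which is the one computation I would grind through carefully.
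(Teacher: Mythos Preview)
Your proposal has the same two key ingredients as the paper's proof: using admissibility of $X^c$ (Proposition~\ref{completeadmissible}) together with Lemma~\ref{lemadelta} to get $\ed^{\LE{}}_{X^c}E_L=E_{\widehat{\ed^Eh}}=0$, and identifying $\ed^{\LE{}}_{X^c}\Theta_L=\Theta_{\widehat{\ed^Eh}}$. What you are missing is that the latter equals $\ed^{\LE{}}(h\circ\tau^E)$ (this is the content of the ``basic form'' computation you allude to, once you note $\rho^{\LE{}}(Y^c)(h\circ\tau^E)=(\rho^E(Y)h)\circ\tau^E=(\ed^Eh)(Y)\circ\tau^E$). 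The paper then applies the Cartan formula \emph{for $X^c$}, not for $Z_L$:
\[
\ed^{\LE{}}\bigl(\iota_{X^c}\Theta_L-h\circ\tau^E+K\bigr)
=\ed^{\LE{}}_{X^c}\Theta_L-\iota_{X^c}\ed^{\LE{}}\Theta_L-\ed^{\LE{}}(h\circ\tau^E)
=\iota_{X^c}\w_L,
\]
so $\rho^{\LE{}}(Z_L)f=\iota_{Z_L}\iota_{X^c}\w_L=-\iota_{X^c}\iota_{Z_L}\w_L=-\iota_{X^c}\ed^{\LE{}}E_L=0$. No bracket with $Z_L$ ever appears.

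By contrast, you expand $\ed^{\LE{}}_{Z_L}(\iota_{X^c}\Theta_L)$ and pick up the term $\iota_{[Z_L,X^c]_{\LE{}}}\Theta_L$, which you leave as ``the one computation I would grind through carefully''. As written this is a gap: you give no mechanism to kill that term, and the heuristic you sketch (matching it against $\rho^{\LE{}}(Z_L)(h\circ\tau^E)$ via SODE identities) is not what actually happens. In fact the bracket term vanishes \emph{on its own}: $\Theta_L(Y)=\ed^{\LE{}}L(S(Y))$, and from the proof of Proposition~\ref{completeadmissible} one has $S([X^c,Z_L]_{\LE{}})=[X^c,S(Z_L)]_{\LE{}}=[X^c,\Delta]_{\LE{}}=0$ by Lemma~\ref{liouvillecomplete}, hence $\iota_{[Z_L,X^c]_{\LE{}}}\Theta_L=0$. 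Once you insert this, your remaining terms do collapse (using $\iota_{Z_L}\Theta_L=\ed^{\LE{}}_\Delta L$ and $\rho^{\LE{}}(Z_L)(h\circ\tau^E)=\widehat{\ed^Eh}$). So your route can be completed, but it is strictly longer than the paper's; switching the Cartan formula from $Z_L$ to $X^c$ is the trick that avoids the detour.
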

\begin{proof}
First, note that since $X^c$ is a complete lift, then
$$\iota_{X^c}\ed^{\LE{}}E_L  =  \ed^{\LE{}}_{X^c}E_L =E_{\ed^{\LE{}}_{X^c}L}  =E_{\widehat{d^Eh}}= \ed^{\LE{}}_\Delta(\widehat{\ed^Eh})-\widehat{\ed^Eh}=0,$$
where the last step follows from Lemma \ref{lemadelta}.

Also, we have for the Poincar\'e 1-form

$$\ed^{\LE{}}_{X^c}\Theta_L = \Theta_{\ed^{\LE{}}L}=\Theta_{\widehat{\ed^Eh}}=\ed^{\LE{}}(h\circ\tau^E),$$
where the last step follows from the properties in \eqref{props prolongation}.\\
It follows, using the definition \eqref{EL} of the section $Z_L$, that
\begin{equation}\label{thm noether 1}
\iota_{X^c}\iota_{Z_L}\w_L=\iota_{X^c}\ed^{\LE{}}E_L=\ed^{\LE{}}_{X^c}E_L=0.\end{equation}
Now, putting $f=\iota_{X^c}\Theta_L-h\circ\tau^E+K$, as in the statement, we have
\begin{equation}\label{thm noether 2}\rho^{\LE{}}(Z_L)f=\iota_{Z_L}\ed^{\LE{}}f=\iota_{Z_L}\ed^{\LE{}}\left(\iota_{X^c}\Theta_L-h\circ\tau^E +K \right).
\end{equation}
Using \eqref{magic formula} we can write
\begin{eqnarray*}\ed^{\LE{}}(h\circ\tau^E)  & = &  \Theta_{\widehat{\ed^Eh}}=\ed^{\LE{}}_{X^c}\Theta_L  =\iota_{X^c}\ed^{\LE{}}\Theta_L+\ed^{\LE{}}\iota_{X^c}\Theta_L\\ & = & -\iota_{X^c}\w_L+\ed^{\LE{}}\iota_{X^c}\Theta_L.\end{eqnarray*}
From this last expression,
$$\ed^{\LE{}}(\iota_{X^c}\Theta_L-h\circ\tau^E+K)=\iota_{X^c}\w_L.$$
Therefore,  \eqref{thm noether 2} is equivalent to
$$\rho^{\LE{}}(Z_L)f=\iota_{Z_L}\iota_{X^c}\w_L=0,$$
where we have used \eqref{thm noether 1}. This proves that $f$ is constant along the dynamics generated by $L$.
\end{proof}

\section{An example. The rigid body}
We will illustrate the applicability of Theorem \ref{noether} with an example from Classical Mechanics: the rigid body with an axis of symmetry. For that, let $E$ be the trivial Lie algebroid $\mathfrak{so}(3)$ with base a point. Then, it is immediate to compute from the definitions that\begin{itemize}
\item   $\rho^E=0$ and $\tau^E=0$,
\item  $\mathcal{L}E=\{(a,v_b)\, a\in\g,\,v_b\in T\g\}=\g\times\g\times\g$,
\item  $\rho^{\mathcal{L}E}(a,v_b)=v_b$ and
\item  $\tau^{\mathcal{L}E}(a,v_b)=b$.
\end{itemize}
Let $\{\xi_1,\xi_2,\xi_3\}$ be a basis for $\mathfrak{so}(3)$. In its associated linear coordinates, an element $y\in\mathfrak{so}(3)$ is expressed as $y=(y^1,y^2,y^3)$. We will consider the purely kinetic Lagrangian funcion
$$L(y)=\frac12\left(I_1(y^1)^2+I_2(y^2)^2+I_3(y^3)^2\right),$$
corresponding to the quadratic form $\mathbb{I}$ on $\mathfrak{so}(3)$, given in this basis by the expression
$$\mathbb{I}=\operatorname{diag}(I_1,I_2,I_3)$$

The equations of motion for this Lagrangian system are easily obtained from \eqref{localEL}, making $\rho^i_\alpha=0$ and $C_{\alpha,\beta}^\gamma=\epsilon_{\alpha,\beta,\gamma}$, the structure constants for $\mathfrak{so}(3)$. They are

$$\begin{array}{ccc}
I_1\dot y^1 & = & (I_2-I_3)y^2y^3\\
I_2\dot y^2 & = & (I_3-I_1)y^1y^3\\
I_3\dot y^3 & = & (I_1-I_2)y^1y^2,
\end{array}$$
where $I_1,I_2,I_3$ are constants.

These are, of course, Euler's equations for a rigid body in $\mathbb{R}^3$ having $\mathbb{I}$ as its inertia tensor. Suppose now that the body has an axis of symmetry. For instance, this happens if $I_2=I_3$. It follows from the previous equations that  then $I_1\dot y^1=0$, which implies that $I_1y^1$ is conserved along the dynamics, since $I_1$ is constant. We will now show how to obtain this conservation law from Theorem \ref{noether}.

Since the base $M$ of $E$ is trivial, the basis element $\xi_1\in\mathfrak{so}(3)$ can be regarded as a section $\xi_1\in\Gamma (E)$. Then we have
  $$\xi_1^c(y)=(\xi_1,y^3\frac{\partial}{\partial y^2}-y^2\frac{\partial}{\partial y^3})\quad\text{and}\quad \xi_1^v(y)=(0,\frac{\partial}{\partial y^1}).$$

In order to check if $\xi_1$ generates a symmetry of the Lagrangian, leading to a N\"other conserved quantity we compute
$$\rho^{\mathcal{L}E}(\xi_1^c)L=(y^3\frac{\partial}{\partial y^2}-y^2\frac{\partial}{\partial y^3})L=y^2y^3(I_2-I_3)=0,$$
which is of the form $\rho^{\mathcal{L}E}(\xi_1^c)L=\widehat{\ed^Eh}$ with $h=0$.
Then from Theorem \ref{noether} the function $f=\iota_{\xi_1^C}\Theta_L-h\circ\tau^E=\iota_{\xi_1^C}\Theta_L$ is a conserved quantity. The explicit form of $f$ is
\begin{eqnarray*}f & = & (\ed^{\mathcal{L}E}L\circ S)(\xi_1^C) =  \ed^{\mathcal{L}E}L(\xi_1^V)\\ & = & \rho^{\mathcal{L}E}(\xi_1^V)L
 =  \frac{\partial}{\partial y^1}L\\
& = & I_1y^1\end{eqnarray*}

Therefore $I_1y^1$ is a N\"other conserved quantity.

\section{Non-N\"other Conserved Quantities}

In a more general setup, we may have a Lagrangian $L$ such that its
dynamics $Z_L$ is locally Hamiltonian for a different 2-form
$\w'\in\Omega^2(\tau^{\LE{}})$, which may be degenerate. That is,
$\ed^{\LE{}}_{Z_L}\w'=0$. In this case N\"other's theorem is not
applicable since there is  no  one-parameter family of gauge
equivalent Lagrangians. However in this situation we can still
obtain a family of conserved quantities by taking advantage of the fact
that $Z_L$ is a symmetry of both $\w_L$ and $\w'$.
\begin{thm}\label{haz}
Let $L$ be a regular Lagrangian and $\w'\in\Omega^2(\tau^{\LE{}})$
such that
$$\ed^{\LE{}}_{Z_L}\w'=0.$$ Then the coefficients of the
polynomial $f(\lambda)$ in one real variable defined by
\begin{equation}\label{bihamiltonian}(\w'-\lambda\,\w_L)^{\wedge\,\mathrm{rank}E}=f(\lambda)\,\w_L^{\wedge\,\mathrm{rank}E}\end{equation}
are conserved quantities for $Z_L$.
\end{thm}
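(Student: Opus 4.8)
The plan is to reduce the statement to the fact that both two-forms making up the pencil $\w'-\lambda\,\w_L$ are invariant along the dynamics in the Lie-algebroid sense, i.e. are annihilated by $\ed^{\LE{}}_{Z_L}$, and then to propagate this invariance to the top exterior powers using that $\ed^{\LE{}}_{Z_L}$ is a derivation of the exterior algebra. The invariance of $\w'$ is exactly the hypothesis $\ed^{\LE{}}_{Z_L}\w'=0$, so the only genuine preliminary task is to verify that $\w_L$ itself is $Z_L$-invariant.

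For this I would apply the magic formula \eqref{magic formula} to write
$$\ed^{\LE{}}_{Z_L}\w_L=\iota_{Z_L}\ed^{\LE{}}\w_L+\ed^{\LE{}}\iota_{Z_L}\w_L.$$
The first term vanishes because $\w_L=-\ed^{\LE{}}\Theta_L$ is closed, since $\ed^{\LE{}}\circ\ed^{\LE{}}=0$; the second vanishes because $\iota_{Z_L}\w_L=\ed^{\LE{}}E_L$ by \eqref{EL}, so $\ed^{\LE{}}\iota_{Z_L}\w_L=\ed^{\LE{}}\ed^{\LE{}}E_L=0$. Hence $\ed^{\LE{}}_{Z_L}\w_L=0$, and combining with the hypothesis gives $\ed^{\LE{}}_{Z_L}(\w'-\lambda\,\w_L)=0$ for every $\lambda\in\R$.

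Next I would exploit the derivation property of $\ed^{\LE{}}_{Z_L}$ with respect to $\wedge$. Applying it repeatedly, the invariance of $\w_L$ and of $\w'-\lambda\,\w_L$ carries over to their top powers, giving $\ed^{\LE{}}_{Z_L}\big(\w_L^{\wedge\,\mathrm{rank}\,E}\big)=0$ and $\ed^{\LE{}}_{Z_L}\big((\w'-\lambda\,\w_L)^{\wedge\,\mathrm{rank}\,E}\big)=0$. Applying $\ed^{\LE{}}_{Z_L}$ to the defining relation \eqref{bihamiltonian}, and using that on a function the Lie derivative reduces to $\ed^{\LE{}}_{Z_L}f(\lambda)=\rho^{\LE{}}(Z_L)f(\lambda)$, I obtain
$$0=\big(\rho^{\LE{}}(Z_L)f(\lambda)\big)\,\w_L^{\wedge\,\mathrm{rank}\,E}.$$
Since $L$ is regular, $\w_L$ is nondegenerate and $\w_L^{\wedge\,\mathrm{rank}\,E}$ is a nowhere-vanishing top-degree form, so the cancellation forces $\rho^{\LE{}}(Z_L)f(\lambda)=0$ for all $\lambda$. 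Reading this identity off coefficient by coefficient as a polynomial in $\lambda$ yields $\rho^{\LE{}}(Z_L)$ annihilating each coefficient of $f(\lambda)$, which is precisely the assertion that each coefficient is a conserved quantity for $Z_L$.

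I do not expect a serious obstacle: once the $Z_L$-invariance of $\w_L$ is established, the argument is purely formal. The one point needing care is the degree bookkeeping. Since $\w_L$ is a two-form and $\mathrm{rank}\,\LE{}=2\,\mathrm{rank}\,E$, the correct exponent is $\mathrm{rank}\,E$ (not $2\,\mathrm{rank}\,E$), so that $\w_L^{\wedge\,\mathrm{rank}\,E}$ is exactly the top power; and it is nonzero precisely by nondegeneracy of $\w_L$, which is what licenses the final cancellation and hence the whole conclusion.
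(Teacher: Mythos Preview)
Your proposal is correct and follows essentially the same route as the paper: both use the derivation property of $\ed^{\LE{}}_{Z_L}$ on wedge products together with the invariance of $\w_L$ and $\w'$ to kill the top power, then cancel $\w_L^{\wedge\,\mathrm{rank}\,E}$ using regularity. You are in fact a bit more careful than the paper, which simply invokes $\ed^{\LE{}}_{Z_L}\w_L=0$ as a ``compatibility condition'' without the Cartan-formula justification you supply.
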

\begin{proof}
Since $(\w'-\lambda\,\w_L)\in\Omega^2(\tau^{\LE{}})$, then its
 power of degree equal to $\mathrm{rank}\,E$  belongs to the top cohomology class of
$\Omega^\bullet(\tau^{\LE{}})$ and hence it must be proportional to
the orientation form $\w_L^{\wedge\,\mathrm{rank}E}$ (since $L$ is
regular) by a polynomial in $\lambda$ with coefficients in $C^\infty(E)$ which
may have zeros. Therefore \eqref{bihamiltonian} is well-defined.
Now, using the two compatibility conditions $\ed^{\LE{}}_{Z_L}\w'=0$
and $\ed^{\LE{}}_{Z_L}\w_L=0$ and the usual properties of the Lie
derivative with respect to the wedge product, we have
\begin{eqnarray*}
0 & = &
\ed^{\LE{}}_{Z_L}(\w'-\lambda\,\w_L)^{\wedge\,\mathrm{rank}E}=(\ed^{\LE{}}_{Z_L}f(\lambda))\w_L^{\wedge\,\mathrm{rank}E}
+f(\lambda)\ed^{\LE{}}_{Z_L}\w_L^{\wedge\,\mathrm{rank}E}\\
& = &
(\ed^{\LE{}}_{Z_L}f(\lambda))\w_L^{\wedge\,\mathrm{rank}E},\end{eqnarray*}
from where it follows by the non-degeneracy of $L$ that
$\ed^{\LE{}}_{Z_L}f(\lambda)=\rho^{\mathcal{L}E}(Z_L)f(\lambda)=0$. \end{proof} We  will now focus  in
the particular case when there is a function $L'$ 
equivalent to $L$ in the sense of Definition  \ref{def equivalence}, since in this case $\w_{L'}$ satisfies the
hypotheses of Theorem \ref{haz}. Note that by the definition
\eqref{omega} of $\w_L$, we have $\w_{L'}-\lambda\,\w_{L}=\w_{L'-\lambda
L}$. If $p=\mathrm{rank}\,E$, we have from
\eqref{localsymplecticform} the local expression
$$\w_L^{\wedge\,p}=p!\det \left(\frac{\partial^2 L}{\partial y^\alpha\partial y^\beta}\right)\widetilde{T}^1\wedge\ldots
\wedge\widetilde{T}^p\wedge\widetilde{V}^1\wedge\ldots\wedge\widetilde{V}^p.$$
Consequently for \eqref{bihamiltonian} we can write
$$(\w_{L'}-\lambda\,\w_L)^{\wedge\,p}=p!\det \left(\frac{\partial^2
L'}{\partial y^\alpha\partial y^\beta}-\lambda \frac{\partial^2
L}{\partial y^\alpha\partial
y^\beta}\right)\widetilde{T}^1\wedge\ldots
\wedge\widetilde{T}^p\wedge\widetilde{V}^1\wedge\ldots\wedge\widetilde{V}^p,$$
and so, in this trivialization $$f(\lambda)=\frac{\det (A'-\lambda
A)}{\det A}=\det (A'A^{-1}-\lambda I),$$ where
\begin{equation}\label{hazmatrices}A'=\left[\frac{\partial^2
L'}{\partial y^\alpha\partial y^\beta} \right]_{\alpha,\beta\in
(1,\ldots,\mathrm{rank}\,E)}\quad A=\left[\frac{\partial^2
L}{\partial y^\alpha\partial y^\beta} \right]_{\alpha,\beta\in
(1,\ldots,\mathrm{rank}\,E)} \end{equation} are locally defined
matrices. We have then proved the following result.

\begin{prop}
Let $L$ and $L'$ be dynamically equivalent Lagrangians. Then, the
coefficients of the characteristic polynomial of $A'A^{-1}$, with
$A$ and $A'$ as in \eqref{hazmatrices}, are locally defined
conserved quantities for the dynamics associated
to $L$.
\end{prop}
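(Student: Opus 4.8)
The plan is to read the statement as an immediate application of Theorem \ref{haz} to the particular two-form $\w'=\w_{L'}$, with dynamical equivalence supplying exactly the hypothesis that makes $\w_{L'}$ an invariant of the flow of $Z_L$. Concretely, dynamical equivalence in the sense of Definition \ref{def equivalence} means $Z_L=Z_{L'}$, so the argument splits into two independent tasks: first, checking that $\w_{L'}$ is preserved by $Z_L$; and second, identifying the invariant polynomial $f(\lambda)$ produced by Theorem \ref{haz} with the characteristic polynomial of $A'A^{-1}$.

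For the first task I would establish the general fact that the dynamics of any regular Lagrangian preserves its own Poincar\'e $2$-form. Applying the Cartan (magic) formula \eqref{magic formula} on $\LE{}$ to $Z_{L'}$ gives
$$\ed^{\LE{}}_{Z_{L'}}\w_{L'}=\iota_{Z_{L'}}\ed^{\LE{}}\w_{L'}+\ed^{\LE{}}\iota_{Z_{L'}}\w_{L'},$$
and both terms vanish: the first because $\w_{L'}=-\ed^{\LE{}}\Theta_{L'}$ is closed by $\ed^{\LE{}}\circ\ed^{\LE{}}=0$, and the second because $\iota_{Z_{L'}}\w_{L'}=\ed^{\LE{}}E_{L'}$ by \eqref{EL}, whose $\ed^{\LE{}}$ again vanishes. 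Substituting $Z_{L'}=Z_L$ then yields $\ed^{\LE{}}_{Z_L}\w_{L'}=0$, which is precisely the hypothesis of Theorem \ref{haz} with $\w'=\w_{L'}$.

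With the hypothesis verified, Theorem \ref{haz} guarantees that the coefficients of the polynomial $f(\lambda)$ defined through $(\w_{L'}-\lambda\,\w_L)^{\wedge p}=f(\lambda)\,\w_L^{\wedge p}$, with $p=\mathrm{rank}\,E$, are conserved along $Z_L$. To evaluate $f(\lambda)$ explicitly I would use the local expression \eqref{localsymplecticform}: taking the $p$-fold wedge power annihilates all of the $\widetilde{T}^\alpha\wedge\widetilde{T}^\beta$ contributions, so $\w_L^{\wedge p}$ reduces to $p!\det A$ times the standard top form, and likewise $(\w_{L'}-\lambda\,\w_L)^{\wedge p}$, using $\w_{L'}-\lambda\,\w_L=\w_{L'-\lambda L}$, reduces to $p!\det(A'-\lambda A)$ times the same form. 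Dividing gives $f(\lambda)=\det(A'-\lambda A)/\det A=\det(A'A^{-1}-\lambda I)$, whose coefficients are precisely those of the characteristic polynomial of $A'A^{-1}$; this establishes the Proposition.

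The argument is essentially the assembly of Theorem \ref{haz} with the local determinant computation, so I expect no serious obstacle. The one point demanding care is the invertibility of $A$, which is exactly where regularity of $L$ enters: it both validates the division by $\det A$ and ensures that $\w_L^{\wedge p}$ is a genuine top (orientation) form, making the defining relation for $f(\lambda)$ unambiguous. Tracking the fact that only the top-degree part of each wedge power survives is what isolates the determinants and converts the geometric content of Theorem \ref{haz} into the stated algebraic statement about $A'A^{-1}$.
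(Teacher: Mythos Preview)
Your proposal is correct and follows essentially the same route as the paper: verify that dynamical equivalence $Z_L=Z_{L'}$ makes $\w'=\w_{L'}$ satisfy the hypothesis of Theorem \ref{haz}, then use the local expression \eqref{localsymplecticform} to compute $(\w_{L'}-\lambda\w_L)^{\wedge p}$ and identify $f(\lambda)=\det(A'A^{-1}-\lambda I)$. If anything, you are slightly more explicit than the paper in justifying $\ed^{\LE{}}_{Z_L}\w_{L'}=0$ via the Cartan formula, whereas the paper simply asserts that $\w_{L'}$ satisfies the hypothesis of Theorem \ref{haz}.
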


This result is an extension to the framework of Lagrangian  mechanics on
Lie algebroids of the results by Hojman and Harleston \cite{HH}
as explained in \cite{CI}. In fact, we can make use of the Le Verrier method of determining the characteristic
equation of a matrix (see e.g. \cite{W}): the coefficients of the characteristic
equation of a matrix $M$ are determined by the traces of the increasing powers $M^k$ by
means of Newton's equations, therefore the mentioned  traces are also
constants of the motion.

\section*{Acknowledgements}
The research of J.F.C. was partially supported by research projects MTM2006-10531 and E24/1 (DGA). The research of M.R.-O. was partially supported by the research project MTM2006-03322 and a European Marie Curie Fellowship (IEF). M.R.-O. also wishes to thank the Department of Theoretical Physics of the University of Zaragoza for their hospitality during a research visit in February 2007. Finally, we would like to thank the anonymous referees for their suggestions on the paper.

\end{document}